\documentclass[11pt]{article}

\topmargin=-.50in
\textheight=9.0in
\textwidth=6.5in
\oddsidemargin=-0.25in
\evensidemargin=-0.25in

\usepackage{amstext}
\usepackage{amssymb}
\usepackage{amsthm}
\usepackage{amssymb}
\usepackage{amsmath}
\usepackage{epsfig}
\usepackage{float}
\usepackage{times}
\usepackage{verbatim}
\usepackage{subcaption}

\DeclareMathOperator*{\argmax}{arg\,max}

\newcommand{\Pmatrix}[1]{\begin{pmatrix}#1\end{pmatrix}}
\newcommand{\theset}[1]{\left\{#1\right\}}
\newcommand{\magn}[1]{{\left| #1 \right|}}
\newcommand{\norm}[1]{\left\Vert #1 \right\Vert}
\newcommand{\st}{\textrm{\ \Big\vert\ }}
\newcommand{\tr}{^\mathrm{T}}

\newcommand{\und}{\ \&\ }
\newcommand{\Det}[1]{\mathbf{det}\left[#1\right]}
\newcommand{\Int}[1]{\mathbf{Int}\thinspace #1}
\newcommand{\diag}[1]{\mathbf{diag} #1}
\newcommand{\ssSys}[4]{\left( \begin{array}{c|c}
   #1 & #2 \\
   \hline
   #3 & #4 \\
\end{array}\right)}

\newtheorem{Lemma}{Lemma}[section]
\newtheorem{Assumption}{Assumption}[section]
\newtheorem{Proposition}{Proposition}[section]
\newtheorem{Theorem}{Theorem}[section]

\newtheorem*{Example*}{Example}

\title{Higher-Order Uncoupled Dynamics Do Not Lead to Nash Equilibrium --- Except When They Do\thanks{
Sarah A. Toonsi (stoonsi2@illinois.edu) and Jeff S. Shamma (jshamma@illinois,edu) are with the Department of Industrial and Enterprise Systems Engineering, University of Illinois Urbana-Champaign, Urbana, Illinois, USA.}}
\author{{Sarah A. Toonsi} \and {Jeff S. Shamma}}

\begin{document}

\maketitle

\begin{abstract}
The framework of multi-agent learning explores the dynamics of how individual agent strategies evolve in response to the evolving strategies of other agents. Of particular interest is whether or not agent strategies converge to well known solution concepts such as Nash Equilibrium (NE). Most ``fixed order'' learning dynamics restrict an agent's underlying state to be its own strategy. In ``higher order'' learning, agent dynamics can include auxiliary states that can capture phenomena such as path dependencies. We introduce higher-order gradient play dynamics that resemble projected gradient ascent with auxiliary states. The dynamics are ``payoff based'' in that each agent's dynamics depend on its own evolving payoff. While these payoffs depend on the strategies of other agents in a game setting, agent dynamics do not depend explicitly on the nature of the game or the strategies of other agents. In this sense, dynamics are ``uncoupled'' since an agent's dynamics do not depend explicitly on the utility functions of other agents. We first show that for any game with an isolated completely mixed-strategy NE, there exist higher-order gradient play dynamics that lead (locally) to that NE, both for the specific game and nearby games with perturbed utility functions. Conversely, we show that for any higher-order gradient play dynamics, there exists a game with a unique isolated completely mixed-strategy NE for which the dynamics do not lead to NE. These results build on prior work that showed that uncoupled fixed-order learning cannot lead to NE in certain instances, whereas higher-order variants can. Finally, we consider the mixed-strategy equilibrium associated with coordination games. While higher-order gradient play can converge to such equilibria, we show such dynamics
must be inherently internally unstable. 
\end{abstract}

\section{Introduction}

The framework of learning in games explores how game theoretic solution concepts emerge as the outcome of dynamic processes where agents adapt their strategies in response to the evolving strategies of other agents \cite{fudenberg1998theory,fudenberg2009learning,hart2005adaptive,young2004strategic}.
There is a multitude of specific cases of learning dynamics/game combinations that result in a range of outcomes, including convergence, limit cycles, chaotic behavior, and stochastic stability \cite{hart2000simple, berger2005fictitious, monderer1996fictitiousplay, shamma2004unified, shapley1964sometopics, foster1998nonconvergence, foster1997calibrated, piliouras2014optimization,Young1998Book}.

The emphasis in this literature is on simple adaptive procedures that can result in various solution concepts (e.g., Nash equilibrium, correlated equilibrium, and coarse correlated equilibrium). (A separate concern is the complexity associated with such computations (e.g. \cite{daskalakis2009complexity, babichenko2022communication}).) Using the terminology of the introduction in \cite{hart2013simple}, such dynamics should be ``natural''. 

One ``natural'' restriction for learning is that the dynamics of one agent should not depend explicitly on the utility functions of other agents. Agents adapt their strategies based on considerations derived from their own utility functions. There still will be indirect dependencies (e.g., an agent's payoff depends on the actions of other players which evolve according to the utility functions of other players). 

This restriction was called in \cite{hart2003uncoupled} ``uncoupled'' dynamics. In particular,  \cite{hart2003uncoupled} constructed a specific anti-coordination matrix game for which no uncoupled learning dynamics could converge to the unique mixed-strategy NE. In that setting, the order of the learning dynamics was restricted to match the dimension of the strategy space. Subsequent work \cite{shamma2005dynamic} showed that restriction on the order of the learning dynamics was essential. In particular,  by introducing additional auxiliary states, \cite{shamma2005dynamic} showed that higher-order learning could overcome the obstacle of convergence to NE in the same anti-coordination game while remaining uncoupled.  

Higher-order learning in games can be seen as a parallel to higher-order optimization algorithms, such as momentum-based or optimistic gradient algorithms (e.g., \cite{muehlebach2021optimization, daskalakis2018limit}). Such algorithms utilize the history of gradients to update the underlying search parameter. In this way, there is a path dependency on the trajectory of gradients. Likewise, higher-order learning introduce path dependencies, and hence behaviors, that are not possible by their fixed order counterparts. An early utilization of higher-order learning is in \cite{basar1987relaxation}, in which a player's strategy update uses two stages of history of an opponent's strategies in a zero-sum setting to eliminate oscillations. Similar ideas were used in \cite{conlisk1993adaptation}. Reference \cite{flam2003newtonian} modified gradient based algorithms through the introduction of a cumulative (integral) term. In \cite{shamma2005dynamic}, higher-order dynamics were used to create a myopic forecast of the action of other agents. Reference \cite{laraki2012higher} introduce a version of higher-order replicator dynamics and show that, unlike fixed order replicator dynamics, weakly dominated strategies become extinct. Reference \cite{gao2021passivity} utilizes the system theoretic notion of passivity to analyze a family of higher-order dynamics.

In this paper, we further explore the implications of learning dynamics that are uncoupled. In our setting, an agent's learning dynamics do not depend explicitly on the utility functions of other agents \textit{or even} its own utility function. Rather, learning dynamics depend on the evolution of a payoff vector that is viewed as an externality. We call these ``payoff based'' dynamics. (Such a setting was called ``radically uncoupled'' in \cite{foster2006regret}). When players are engaged in a game, then the payoff stream of one agent depends on the actions of other agents. However, the learning dynamics themselves do not change based on the source of the payoff streams. 

First, we investigate the ability of payoff-based dynamics to converge to mixed-strategy Nash equilibria. We show that for any game with a mixed-strategy NE, there exist payoff-based dynamics that converge locally to that NE. This result is established by making a connection between convergence to NE and the existence of decentralized stabilizing control \cite{wang1973stabilization,davison1990decentralized}. A consequence of the payoff-based structure is that the dynamics also converge to the NE of nearby perturbations of the original game. This outcome is inherited from stability being an open property—i.e., the property of stability is robust to small perturbations.

The form of higher-order learning used for this stability result is higher-order gradient play, which generalize gradient ascent. We show that for any such dynamics, there exists a game with a unique mixed-strategy NE that is unstable under given dynamics. The specific game is a scaled version of the anti-coordination game considered in \cite{hart2003uncoupled}. The tool utilized is a classical analysis method in feedback control systems known as root-locus (e.g., \cite{krall1961extension}, \cite{krall1970rootlocus}), which characterizes the locations of the eigenvalues of a matrix as a function of a scalar parameter. 

A combination of the above results suggests the lack of universality on the side of both learning dynamics and games. While any mixed-strategy NE can be stabilized by suitable higher-order gradient dynamics, any such dynamics can be destabilized by a suitable anti-coordination game.

Finally, we examine the implications of higher-order dynamics being able to converge to the mixed-strategy NE of a $2\times 2$ coordination game, which has two pure NE and one mixed-strategy NE. We show that such higher-order gradient play dynamics must have an inherent internal instability, which makes them unsuitable, if not irrational, as a model of learning.  

The remainder of this paper is organized as follows. Section 2 reviews the notions of finite games and payoff-based learning dynamics. Section 3 introduces both standard gradient play dynamics and higher-order gradient play dynamics. Section 4 provides the mixed-strategy NE stabilization setup and results. Section 5 discusses the non-convergence proof for higher-order gradient play dynamics.  Section 6 discusses mixed-strategy NE that require inherently unstable learning dynamics for stabilization. Section 7 provides simulations and examples. Finally, Section 8 gives concluding remarks. 

\section{Payoff-Based Learning Dynamics}

\subsection{Finite Games}

We consider finite (normal form) games over mixed-strategies. There are $n$ players. The strategy space of player $i\in\theset{1,2,...,n}$ is the probability simplex,
$\Delta(k_i)$, where $k_i$ is a positive integer and $\Delta(\cdot)$ is defined as
$$\Delta(\kappa) = \theset{s\in \mathbb{R}^\kappa \st s_j\ge 0, j = 1, ..., \kappa, \und \sum_{j=1}^\kappa s_j = 1}.$$
 
Define the joint strategy space
$$\mathcal{X} =  \Delta(k_1) \times ... \times \Delta(k_n).$$
The utility function of player $i$ is a function $u_i: \mathcal{X} \rightarrow \mathbb{R}$. 
We sometimes will write 
$$u_i(x_1,...,x_i,...,x_n) = u_i(x_i,x_{-i}),$$
for $x_{i}\in \Delta(k_i)$ and $x_{-i}\in \mathcal{X}_{-i}$, where $\mathcal{X}_{-i} = \Delta(k_1) \times ... \times \Delta(k_{i-1}) \times \Delta(k_{i+1}) \times ... \times \Delta(k_n)$. 

For convenience, we will restrict our discussion to pairwise interactions. That is, the utility function of player $i$ is defined as
\begin{equation}\label{Payoffstructure}
u_i(x_1,...,x_n) = x_i\tr \sum_{j=1\atop j\not= i}^n M_{ij} x_j.
\end{equation} 
for matrices, $M_{ij}$, $j=1,\dots,n$, $j\neq i$.

We can write the utility function of player $i$ as the inner product
$$u_i(x_1,...,x_n) = x_i\tr P_i(x_{-i})$$
where
$$P_i(x_{-i}) = \sum_{j=1\atop j\not= i}^n M_{ij} x_j \in \mathbb{R}^{k_i}.$$
Accordingly, each element of $P_i(x_{-i})$ can be viewed as a payoff that is associated with a component of player $i$'s strategy vector, $x_i$. Note that the dependence of $P_i(\cdot)$ on the $M_{ij}$ is implicit. 

A Nash equilibrium (NE) is a tuple $(x_1^*,...,x_n^*)\in \mathcal{X}$ such that for all $i = 1,..., n$,
$$u_i(x_i^*,x_{-i}^*) \ge u_i(x_i,x_{-i}^*),\quad \forall x_i\in \Delta(k_i).$$
A completely mixed-strategy NE is such that each $x_i^*$ is in the interior of the simplex, i.e., 
$$x^{*}_{i}\in \Int{\Delta(k_{i})}, \quad \forall i,$$ 
where
$$\Int{\Delta(\kappa)} = \theset{s\in \mathbb{R}^\kappa \st s_j> 0, j = 1, ..., \kappa, \und \sum_{j=1}^\kappa s_j = 1}.$$ 

\subsection{Fixed Order Learning}

Our model of learning is a dynamical system that relates trajectories of a payoff vector, $p_i(t)$, to trajectories of the strategy, $x_i(t)$. In particular,
 learning dynamics for player $i$ are specified by a function $f_i:\Delta(k_i)\times \mathbb{R}^{k_i}\rightarrow \mathbb{R}^{k_i}$ according to
$$\dot{x_i}(t) = f_i(x_i(t),p_i(t)),$$
where $x_i(t)\in\Delta(k_i)$ and $p_i:\mathbb{R}_+ \rightarrow  \mathbb{R}^{k_i}$. We assume implicitly that $f_i$ and $p_i$ are such that there exists a unique solution whenever $x_i(0)\in \Delta(k_i)$. We further assume that the dynamics satisfy the invariance property that
\begin{equation}\label{invariance}
x_i(0)\in\Delta(k_i) \Rightarrow x_i(t)\in \Delta(k_i), \quad \forall t\ge 0.
\end{equation}

Note that we define learning dynamics without specifying the source of the payoff vector, $p_i(t)$, hence the terminology ``payoff-based''. This feature will be an important aspect of our formulation in the discussion of higher-order learning and uncoupled dynamics. This separation of the learning dynamics from the source of the payoff stream has been utilized in analyses of population games based on generalizations of contractive (or stable) games associated with the dynamical system property of passivity (e.g., \cite{fox2013population, gao2021passivity, arcak2021dissipativity}).

Only once a player is coupled with other players in a game through their own (possibly heterogeneous) learning dynamics is when we make the connection
\begin{equation}\label{payoff}
p_i(t)=P_i(x_{-i}(t)).
\end{equation} 
This formulation is illustrated in Figure~\ref{fig:uncoupled}, where the $\text{LD}_i$ denote payoff-based learning dynamics that are interconnected through the game matrices, $M_{ij}$.

\begin{figure}
\begin{center}
\includegraphics[width=0.3\textwidth]{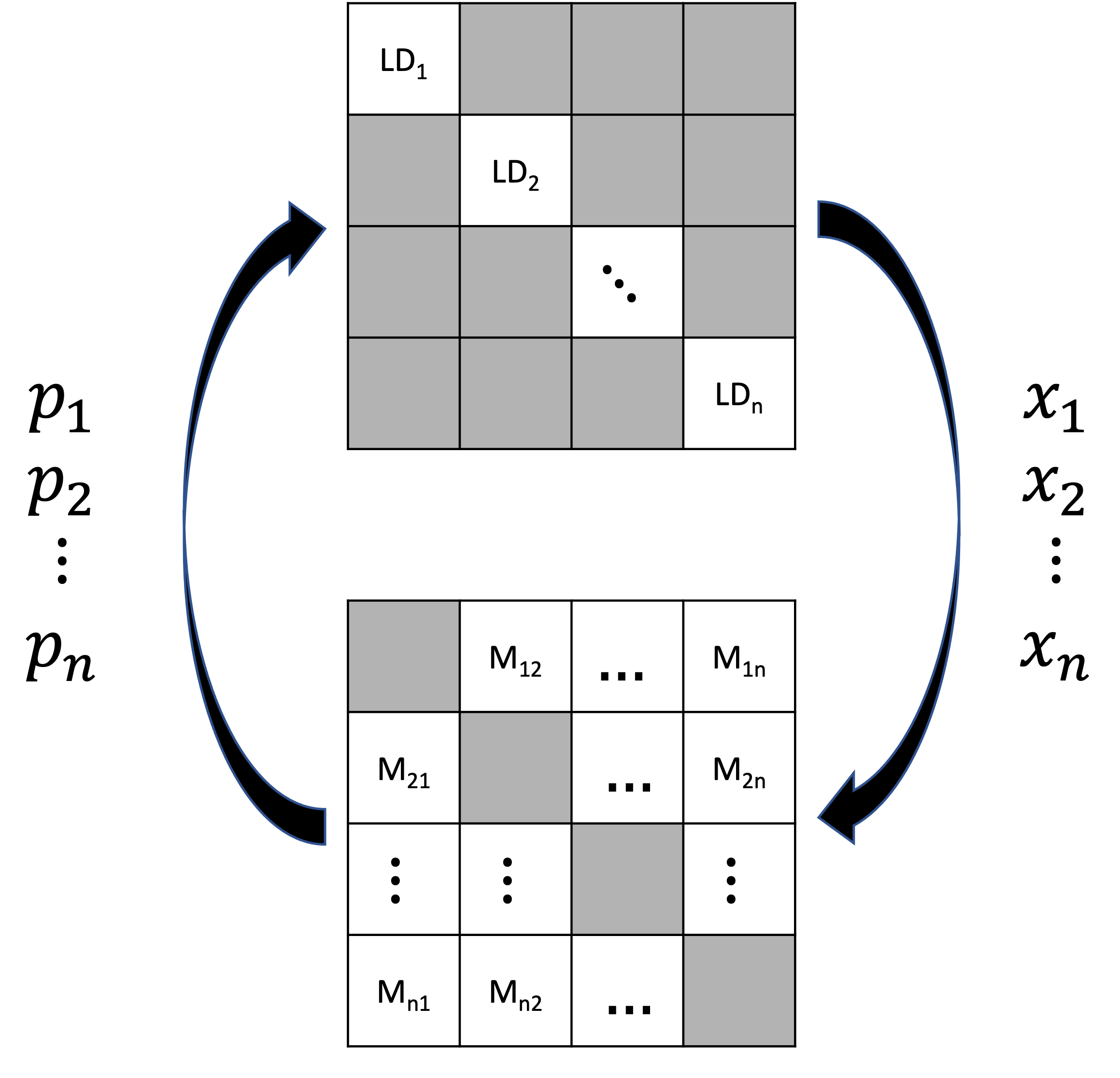}
\caption{Payoff-based learning dynamics ($\text{LD}_i$) in feedback with game matrices ($M_{ij}$).}\label{fig:uncoupled}
\end{center}
\end{figure}

\begin{Example*}[\textbf{Replicator Dynamics}]
The standard representation of replicator dynamics is
$$\dot{x}_i(t) = \diag\left(P_i(x_{-i}(t)) - (x_i(t)\tr P_i(x_{-i}(t)))\cdot \mathbf{1}\right) x_i(t),$$
where $\mathbf{1}$ is a vector of ones of appropriate dimension, and $$\diag\left(v\right):= \Pmatrix{
v_1 & & \\
& \ddots & \\
& & v_k
}, $$ where $v=\Pmatrix{v_1\\\vdots\\v_k}$. The parameters of the matrix game, $M_{ij}$, are built into the function $P_i(\cdot)$. 
In our setting, replicator dynamics take the parallel form
$$\dot{x}_i(t) =\mathbf{diag}\left(p_i(t) -(x_i(t)\tr p_i(t) )\cdot\mathbf{1}\right)x_i(t).$$
There is no specification of game parameters, $M_{ij}$; the payoff stream may or may not be coming from other players; and even if so, other players need not be learning according to replicator dynamics.
\end{Example*}

\begin{Example*}[\textbf{Smooth Fictitious Play}]
First, define the softmax function (or Gibbs distribution)
$$\beta: \mathbb{R}^\kappa \times \mathbb{R}_+ \rightarrow \mathbb{R}^\kappa$$
as
$$\beta(v;T) = \frac{1}{\sum_{j=1}^\kappa e^{v_j/T}} \Pmatrix{e^{v_1/T}\\ \vdots\\ e^{v_\kappa/T}},$$
where $v = \Pmatrix{v_1\\ \vdots\\ v_\kappa}$. Standard smooth fictitious play takes the form
$$\dot{x}_i(t) = \beta(P_i(x_{-i}(t));T) - x_i(t).$$
Again, the parameters of the game are built into the function $P_i(\cdot)$.
In our setting, smooth fictitious play takes the parallel form
$$\dot{x}_i(t) =\beta(p_i(t);T)-x_i(t).$$
As before, the payoff function $p_i(\cdot)$ is not specified.
\end{Example*}

Note that such learning dynamics are uncoupled by construction since each player can only access its own payoff vector. There is no dependence on the payoff stream of other players. Indeed, there is no dependence on the parameters of one's own utility function.


\subsection{Higher-Order Learning}

The learning dynamics described in the previous section have a fixed order associated with the dimension of the strategy space. Higher-order learning dynamics allow for the introduction of auxiliary states as follows. For any fixed order learning dynamics, $f_i$, we can define a higher-order version as
\begin{align*}
\dot{x}_i(t) &= f_i(x_i(t),p_i(t)+\phi_{i}(p_i(t),z_i(t)))\\
\dot{z}_i(t) &= g_i(p_i(t),z_i(t)).
\end{align*}
As before, $x_i(t) \in \Delta(k_i)$ and $p_i: \mathbb{R}_+ \rightarrow \mathbb{R}^{k_i}$. The new variable $z_i\in \mathbb{R}^{\ell_i}$ represents $\ell_i$ dimensional auxiliary states that evolve according to the $p_i$-dependent dynamics, $g_i$. These enter into the original fixed order dynamics through 
$$\phi_i:\mathbb{R}^{k_i} \times \mathbb{R}^{\ell_i}\rightarrow \mathbb{R}^{k_i}.$$
Accordingly,
$$p_i(t) + \phi_{i}(p_i(t),z_i(t))$$
can be viewed as a modified payoff stream that captures path dependencies in $p_i$, and the original learning dynamics react to this modified payoff stream.

As with the fixed-order counterparts, there is no specification of game parameters in higher-order learning dynamics. In order to enforce that the auxiliary states have no effect on the equilibria of games, we make the following assumption.

\begin{Assumption}\label{asm:vanishing}
If $p_i^*$, and $z_i^*$ are an equilibrium of the higher-order dynamics, i.e.,
$$0 = g_i(p_i^*,z_i^*)$$
then
$$\phi_i(p_i^*,z_i^*) = 0.$$
\end{Assumption}

This assumption assures that the auxiliary states represent purely transient phenomena that disappear at equilibrium.
 
\begin{Example*}[\textbf{Linear Higher-Order Dynamics}]
Define
\begin{align*}
\dot{z}_i&=D_i z_i+E_i p_i\\
\phi_{i}&=F_ip_i+G_iz_i,
\end{align*}
where the matrices satisfy
$$F_i - G_i D_i^{-1} E_i = 0.$$
The higher-order dynamics are linear, and the above matrix equality assures that Assumption~\ref{asm:vanishing} is satisfied.
\end{Example*}

\begin{Example*}[\textbf{Anticipatory Higher-Order Dynamics}]
A special case of linear higher-order dynamics is
\begin{align*}
\dot{z}_i &= \lambda (p_i - z_i)\\
\phi_{i} &= \gamma \lambda (p_i - z_i).
\end{align*}
Similar higher-order dynamics were used in \cite{shamma2005dynamic} for smooth fictitious play to overcome the lack of convergence of uncoupled dynamics to NE in the anti-coordination game analyzed by \cite{hart2003uncoupled}. See also \cite{arslan2006anticipatory} for an analysis with replicator dynamics. 
Intuition behind the connection to anticipation can be seen by viewing $\lambda(p_i - z_i)$ as an approximation of $\dot{p}_i$, and so $p_i(t) + \gamma\lambda(p_i(t)-z_i(t))\approx p_i(t+\gamma)$.

Anticipatory higher-order dynamics also can be linked to optimistic optimization algorithms (e.g., \cite{daskalakis2018limit}). An Euler discretization of step size, $h$, results in
$$z_i^+ = z_i + h\lambda (p_i - z_i),$$
with a modified payoff stream of 
$$p_i + \phi_i = p_i + \gamma \lambda (p_i - z_i).$$
Setting $h = 1/\lambda$ and $\gamma = 1/\lambda$ results in
\begin{align*}
p_i + \phi_i &= p_i + (p_i - z_i)\\
&= p_i + (p_i - p_i^-),
\end{align*}
where the superscripts `$+$' and `$-$' indicate the next and previous discrete time steps, respectively.
\end{Example*}

\section{Gradient Play} 

The main results of the paper will examine the behavior of gradient play and higher-order gradient play, which are the focus of this section.

\subsection {Fixed Order Gradient Play}
In gradient play dynamics, a player adjusts its strategy in the direction of the payoff stream, i.e., 
\begin{equation}\label{gradient}
\dot{x}_i = \Pi_\Delta[x_i + p_i] - x_i, 
\end{equation}
where $\Pi_{\Delta}[x]:\mathbb{R}^{n}\rightarrow\Delta\left(n\right)$ is the projection of $x$ into the simplex, i.e.,
$$\Pi_\Delta(x) = \arg\min_{s\in \Delta(n)}\norm{x - s}.$$

The terminology ``gradient play'' stems from the gradient of an agent's utility function in (\ref{Payoffstructure}) with respect to its own strategy, $x_i$, namely
$$\nabla_{x_i} u_i(x_i,x_{-i}) = P_i(x_{-i}) = \sum_{j=1\atop j\not= i}^n M_{ij} x_j.$$ 
As was done in the description of payoff-based learning, we replace $P_i(x_{-i})$ with the payoff stream $p_i$ without regard to the 
game matrices $M_{ij}$. 

Our primary concern will be studying these dynamics near a completely mixed-strategy NE. To this end, let $x^*=(x_1^*,\hdots,x_n^*)$ be an isolated completely mixed-strategy NE. Since $x_i^*\in \Int{\Delta(k_i)}$, the local behavior of the dynamics around $x_i^*$ is characterized by evolution on a lower-dimensional subset. Thus, we can write
\begin{equation}\label{local}
x_i = x_i^* + N_{i}w_i
\end{equation}
where
\begin{equation}\label{NDefinition}
    \mathbf{1}\tr N_{i} = 0\And N_{i}\tr N_{i} = I.
\end{equation}
Therefore, $w_{i}\in\mathbb{R}^{(k_i-1)\times 1}$ represents deviations from $x_i^*$ and satisfies 
$$w_i(t) = N_{i}\tr (x_i(t) - x_i^*).$$

When all players utilize fixed order gradient play, the collective dynamics near a completely mixed-strategy NE take the form 
\begin{equation}\label{StandardGP}
    \dot{w} = \mathcal{M} w,
\end{equation}
where 
\begin{equation}\label{MGrad}
\mathcal{M}=\mathcal{N}\tr \Pmatrix{
0 & M_{12}&M_{13}&\hdots&M_{1n} \\
M_{21}& 0& M_{23} &\hdots & M_{2n}\\
M_{31}& M_{32} & 0 & ... & M_{3n}\\
\vdots & \vdots & \vdots& \ddots&\vdots\\
M_{n1}&M_{n2}&M_{n3} & \hdots&0
}\mathcal{N},
\end{equation}
and
$$
\mathcal{N}=\Pmatrix{
N_{1} & & \\
& \ddots & \\
& & N_{n}}.
$$

Given the zero trace of $\mathcal{M}$, standard gradient play is always unstable at a completely mixed-strategy NE. Also, for this equilibrium to be  isolated, $\mathcal{M}$ must be non-singular. Otherwise, $\mathcal{M}$ has a non-trivial null space leading to an equilibrium subspace.

\subsection{Higher-Order Gradient Play}
\label{sec:Higherordergradient}
We will be interested in a specific form of higher-order gradient play that uses the following structure of higher-order dynamics:
\begin{align*}
\dot{x}_i &= -x_i + \Pi_\Delta\left[x_i + p_i + N_i(G_i \xi_i+ H_i (N_i\tr p_i-v_i))  \right]\\
\dot{\xi}_i & = E_i \xi_i + F_i (N_i\tr p_i-v_i)\\
\dot{v}_i&=N_i\tr p_i-v_i
\end{align*}
for some $E_i$, $F_i$, $G_i$ and $H_i$. Here, the auxiliary states are $z_i = (v_i,\xi_i)$, which enter into the learning dynamics
through 
$$\phi_i(p_i,\xi_i,v_i) = N_i(G_i \xi_i + H_i(N_i\tr p_i - v_i)).$$

\begin{figure}
\begin{center}
\includegraphics[width=0.5\textwidth]{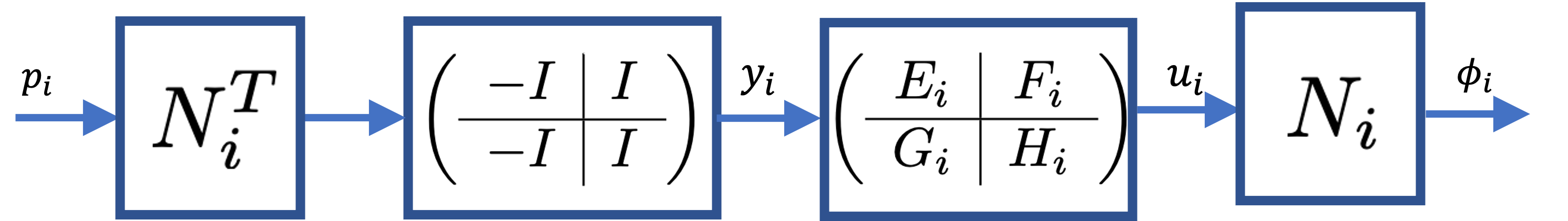}
\caption{Cascade representation of linear higher-order dynamics for gradient play}\label{fig:hogpLinear}
\end{center}
\end{figure}

The motivation behind this structure, illustrated in Figure~\ref{fig:hogpLinear}, assures the enforcement of Assumption~\ref{asm:vanishing}. The payoff stream is first processed by a specific linear system to produce $y_i$ and then by a general linear system to produce $u_i$. The first linear system has the property (see ``washout filters'' in Appendix~\ref{app:washout})
that if $p_i(t)$ converges to a constant, then $y_i(t)$ converges to zero. Accordingly, if the second linear system is stable (determined by the eigenvalues of $E_i$), the output, $u_i$, and hence the modification $\phi_i(t)$ converges to zero. Irrespective of the underlying stability, $\xi_i = 0$ is an equilibrium. Furthermore, $\xi_i$ must be zero at equilibrium whenever $E_i$ is non-singular.

\subsection{Local Stability Analysis}

As before, we can analyze the behavior near a completely mixed-strategy NE $x^* = (x_1^*,...,x_n^*)$. First, define the variables $w_i$ as
$$x_i = x_i^* + N_{i}w_i.$$
Since $x^*$ is a completely mixed NE,
$$p_i^*=\sum_{j\not= i} M_{ij}x_j^* = \alpha_i\mathbf{1}$$
for certain values, $\alpha_i$. The resulting dynamics are
\begin{align*}
\dot{w}_i &= N_i\tr \sum_{j\not= i} M_{ij}(x_j^* + N_j w_j) + G_i \xi_i+ H_i N_i\tr \sum_{j\not= i} M_{ij}(x_j^* + N_j w_j)-H_i v_i\\
\dot{\xi}_i & = E_i \xi_i + F_i N_i\tr \sum_{j\not= i} M_{ij}(x_j^* + N_j w_j)-F_iv_i,\\
\dot{v}_i & = N_i\tr \sum_{j\not= i} M_{ij}(x_j^* + N_j w_j)- v_i,\
\intertext{and using that $N_i\tr p_i^* = 0$ results in}\\[-\baselineskip]
\dot{w}_i &= N_i\tr \sum_{j\not= i} M_{ij}N_j w_j + G_i \xi_i+ H_i \Big( N_i\tr \sum_{j\not= i} M_{ij}N_j w_j - v_i\Big)\\
\dot{\xi}_i & = E_i \xi_i + F_i \Big(N_i\tr \sum_{j\not= i} M_{ij}N_j w_j- v_i\Big)\\
\dot{v}_i & = N_i\tr \sum_{j\not= i} M_{ij} N_j w_j- v_i.
\end{align*}

The collective dynamics near a mixed-strategy NE can be written as
\begin{equation}\label{eq:hogpCL}
\Pmatrix{\dot{w}\\\dot{\xi}\\ \dot{v}}=\Pmatrix{(I+H)\mathcal{M} & G & -H \\F\mathcal{M} & E & -F \\ \mathcal{M} & 0& -I}\Pmatrix{w\\ \xi\\v},
\end{equation}
where the matrices $E$,$F$,$G$, and $H$ are block diagonal with appropriate dimensions and $\mathcal{M}$ is defined in \eqref{MGrad}. 

Local stability of a completely mixed NE is determined by whether the above collective dynamics are stable, i.e., the dynamics matrix in (\ref{eq:hogpCL}) is a stability matrix.
 
\section{Uncoupled Dynamics that Lead to Mixed-Strategy NE}
\label{sec:stabilization}
\subsection{Decentralized Control Formulation}

In the setup of higher-order gradient play, the matrices $(E_i,F_i,G_i,H_i)$ create a dynamical system
that maps (see Figure~\ref{fig:hogpLinear})
$$y_i = N_i\tr p_i - v_i$$
to 
$$u_i = G_i \xi_i + H_i \underbrace{(N_{i}\tr p_i - v_i)}_{y_i}$$
via
$$\dot{\xi}_i = E_i \xi_i + F_i y_i.$$

Let $K_i$ denote the linear dynamical system
$$K_i \sim \ssSys{E_i}{F_i}{G_i}{H_i}.$$
Then the stability of a mixed-equilibrium is tied to the existence of $K_1$, $K_2$, ..., $K_n$, so that the linear system in (\ref{eq:hogpCL}) is  stable.
When the $K_i$ have yet to be determined, we can rewrite (\ref{eq:hogpCL}) as
\begin{subequations}\label{eq:hogpOL}
\begin{align}
\Pmatrix{\dot{w}\cr \dot{v}} &= \Pmatrix{\mathcal{M}& 0\\\mathcal{M}&-I} \Pmatrix{w\cr v} + \Pmatrix{I\cr 0} u,\\
y &= \Pmatrix{\mathcal{M}&-I} \Pmatrix{w\\ v},
\end{align}
\end{subequations}
where 
$$u = \Pmatrix{u_1\\ \vdots\\ u_n}\And y = \Pmatrix{y_1\\ \vdots\\ y_n},$$
and the $y_i$ and $u_i$ are to be related through the $K_i$.

\subsection{Decentralized Stabilization}

Let
$$\mathcal{P} \sim \ssSys{\mathcal{A}}{\mathcal{B}}{\mathcal{C}}{0}$$
with 
\begin{gather*}
\mathcal{A} = \Pmatrix{\mathcal{M}&0\\ \mathcal{M}&-I}, \quad \mathcal{B}= \Pmatrix{I\cr 0}, \quad \mathcal{C}=\Pmatrix{\mathcal{M}&-I}.
\end{gather*}
We first establish that $\mathcal{P}$ can be stabilized by verifying the conditions for stabilizability and detectability (see Appendix~\ref{app:stabilization}). The assumption that $\mathcal{M}$ is non-singular stems from
our interest in isolated NE.

\begin{Proposition}\label{prop:fullrank}
For $\mathcal{M}$ non-singular,  the pair $(\mathcal{A},\mathcal{B})$ is stabilizable,  and the pair $(\mathcal{A},\mathcal{C})$ is detectable. 
\end{Proposition}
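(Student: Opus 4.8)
The plan is to verify both properties directly through the Popov--Belevitch--Hautus (PBH) eigenvector tests, rather than forming controllability/observability Gramians or Kalman matrices. Recall that $(\mathcal{A},\mathcal{B})$ is stabilizable iff $\begin{pmatrix}\mathcal{A}-\lambda I & \mathcal{B}\end{pmatrix}$ has full row rank for every $\lambda$ in the closed right half-plane, and $(\mathcal{A},\mathcal{C})$ is detectable iff $\begin{pmatrix}\mathcal{A}-\lambda I\\ \mathcal{C}\end{pmatrix}$ has full column rank for every such $\lambda$. In fact I expect to prove the stronger statement that these rank conditions hold for \emph{every} $\lambda\in\mathbb{C}$, so that $(\mathcal{A},\mathcal{B})$ is controllable and $(\mathcal{A},\mathcal{C})$ is observable; stabilizability and detectability then follow a fortiori. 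Throughout, the state dimension is $2m$ with $m=\sum_i(k_i-1)$, and the only structural fact I will use is that $\mathcal{M}\in\mathbb{R}^{m\times m}$ is nonsingular.

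For stabilizability, I would examine a left annihilating covector $(a\tr,b\tr)$ of $\begin{pmatrix}\mathcal{A}-\lambda I & \mathcal{B}\end{pmatrix}=\begin{pmatrix}\mathcal{M}-\lambda I & 0 & I\\ \mathcal{M} & -(1+\lambda)I & 0\end{pmatrix}$. The block column coming from $\mathcal{B}$ forces $a\tr=0$ immediately, after which the first block column reduces to $b\tr\mathcal{M}=0$, and nonsingularity of $\mathcal{M}$ gives $b=0$. Hence the matrix has full row rank for all $\lambda$ (the $-(1+\lambda)I$ block, which could lose rank at $\lambda=-1$, never even enters the argument), establishing controllability.

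For detectability, I would look for a right null vector $(\xi;\eta)$ of $\begin{pmatrix}\mathcal{A}-\lambda I\\ \mathcal{C}\end{pmatrix}$, which yields the three block equations $(\mathcal{M}-\lambda I)\xi=0$, $\mathcal{M}\xi-(1+\lambda)\eta=0$, and $\mathcal{M}\xi-\eta=0$. The key manipulation is to subtract the last two, collapsing them to $\lambda\eta=0$; this is where the specific coupling in the plant — two copies of $\mathcal{M}$ together with the mismatch between the $-I$ in $\mathcal{C}$ and the $-(1+\lambda)I$ in $\mathcal{A}-\lambda I$ — does the work. For $\lambda\neq 0$ this gives $\eta=0$ and then $\mathcal{M}\xi=0$, while for $\lambda=0$ the first equation gives $\mathcal{M}\xi=0$ directly; either way nonsingularity of $\mathcal{M}$ forces $\xi=0$ and hence $\eta=0$, so the matrix has full column rank and the pair is observable.

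The computation is essentially routine, so there is no deep obstacle; the one point that deserves care is behavior on the imaginary axis, specifically the potential eigenvalue at $\lambda=0$ in the detectability test. This is precisely where the hypothesis that $\mathcal{M}$ is nonsingular is indispensable: since $\mathcal{A}$ is block lower triangular its spectrum is that of $\mathcal{M}$ together with $-1$, so nonsingularity both rules out $\lambda=0$ as an eigenvalue of $\mathcal{A}$ and closes the $\lambda=0$ case of the null-vector argument. I would flag in the write-up that without this hypothesis the NE is non-isolated, consistent with the remark following \eqref{StandardGP}, and that the detectability argument then genuinely breaks down.
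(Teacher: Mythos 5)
Your proposal is correct and follows essentially the same route as the paper: the paper's proof simply asserts that the PBH matrices $\Pmatrix{\lambda I - \mathcal{A} & \mathcal{B}}$ and $\Pmatrix{\lambda I - \mathcal{A} \\ \mathcal{C}}$ have full row/column rank for all complex $\lambda$, and your left- and right-null-vector computations are exactly the ``straightforward verification'' it leaves to the reader, including the key cancellation $-(1+\lambda)\eta + \eta = -\lambda\eta$ and the use of nonsingularity of $\mathcal{M}$ to close the $\lambda = 0$ case.
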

\begin{proof}
It is straightforward to verify that for all complex $\lambda$ (not just $\mathbf{Re}[\lambda \ge 0]$),
$$\Pmatrix{\lambda I - \mathcal{A} & \mathcal{B}} = 
\Pmatrix{\lambda I - \mathcal{M}&0&I\\-\mathcal{M}& (\lambda+1)I&0}$$
has full row rank (stabilizability) and
$$\Pmatrix{\lambda I - \mathcal{A} \\ \mathcal{C}} = 
\Pmatrix{\lambda I - \mathcal{M}&0\\-\mathcal{M}& (\lambda+1)I\\\mathcal{M}& -I},$$
has full column rank (detectability). 
\end{proof}

While Proposition~\ref{prop:fullrank} establishes that $\mathcal{P}$ can be stabilized, that property alone is inadequate for our purposes. In particular, in order for the learning dynamics to  be uncoupled, we seek to establish decentralized stabilization according to the partition 
\begin{subequations}\label{eq:decentralized}   
\begin{align}
\Pmatrix{\dot{w}\cr \dot{v}} &= \mathcal{A} \Pmatrix{w\cr v} + \sum_{i=1}^{n} \mathcal{B}_i u_{i}\\
y_{i} &=\mathcal{C}_i  \Pmatrix{w\cr v},
\end{align}
\end{subequations}
where
\begin{subequations}\label{eq:mathcalABC}
\begin{gather}
\mathcal{A} = \Pmatrix{\mathcal{M}&0\\ \mathcal{M}&-I},\quad \mathcal{B}_i =\Pmatrix{\mathcal{E}_i \\ 0}\\
\mathcal{C}_i = \Pmatrix{\mathcal{M}_{i\bullet}& -\mathcal{E}_i\tr}.
\end{gather}
\end{subequations}
Here, $\mathcal{M}_{i\bullet}$ denotes the $i^\text{th}$ block row of $\mathcal{M}$, i.e.,
$$\mathcal{M}_i = \Pmatrix{
N_i\tr M_{i1} N_1 &
...&
N_i\tr M_{i(i-1)} N_{i-1}&
0&
N_i\tr M_{i(i+1)} N_{i+1}&
...&
N_i\tr M_{in}N_n}
$$
and $$\mathcal{E}_i\tr = \Big(0\quad ... \quad 0 \quad \underbrace{I}_{i^\text{th} \text{\ position}}\quad  0 \quad ... \quad 0\Big),$$
where $I$ has dimension (suppressed in the notation) of $k_i - 1$.

\begin{Theorem}\label{thm:stabilize} For any isolated (i.e., $\mathcal{M}$ is non-singular) completely mixed-strategy NE, there exist uncoupled higher-order gradient play dynamics such that (\ref{eq:hogpCL}) is stable.
\end{Theorem}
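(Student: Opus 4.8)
The plan is to read the existence of uncoupled higher-order gradient play dynamics as a \emph{decentralized stabilization} problem and to settle it through the theory of decentralized fixed modes. The controllers $K_i \sim \ssSys{E_i}{F_i}{G_i}{H_i}$ form a decentralized dynamic output feedback around the plant $\mathcal{P}\sim\ssSys{\mathcal{A}}{\mathcal{B}}{\mathcal{C}}{0}$ under the channel partition (\ref{eq:decentralized})--(\ref{eq:mathcalABC}): channel $i$ observes only $y_i=\mathcal{C}_i\Pmatrix{w\\v}$ and actuates only through $\mathcal{B}_i$. By the classical result of Wang and Davison \cite{wang1973stabilization,davison1990decentralized}, decentralized controllers rendering (\ref{eq:hogpCL}) stable exist if and only if $\mathcal{P}$ has no decentralized fixed mode $\lambda$ with $\mathbf{Re}[\lambda]\ge 0$. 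Hence the theorem reduces to showing that $\mathcal{P}$ has no such closed-right-half-plane fixed mode; the controllers then supply the higher-order auxiliary states $\xi_i$ (of whatever dimension $\ell_i$ the realization requires), while the washout states $v_i$ are already built into $\mathcal{P}$.

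To certify the absence of fixed modes I would invoke the algebraic rank test: $\lambda$ is a fixed mode precisely when, for some subset $J\subseteq\theset{1,\dots,n}$,
$$\text{rank}\Pmatrix{\lambda I - \mathcal{A}& \mathcal{B}_J\\ \mathcal{C}_{\bar J}& 0} < \dim\Pmatrix{w\\v},$$
where $\mathcal{B}_J$ stacks the input channels in $J$ and $\mathcal{C}_{\bar J}$ the output channels in $\bar J$. The two extreme partitions $J=\theset{1,\dots,n}$ and $J=\emptyset$ reproduce exactly the stabilizability and detectability pencils already certified to have full rank in Proposition~\ref{prop:fullrank}, so all the work lies in the intermediate subsets.

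For intermediate $J$ I would exploit the structure in (\ref{eq:mathcalABC}) by analyzing a right null vector $(\zeta,\nu,\mu)$ of the pencil, where $\zeta,\nu$ correspond to the $w$- and $v$-coordinates and $\mu$ to the inputs in $J$. The $v$-block row reads $-\mathcal{M}\zeta+(\lambda+1)\nu=0$; since $\mathbf{Re}[\lambda]\ge 0$ guarantees $\lambda\ne -1$, this gives $\nu=(\lambda+1)^{-1}\mathcal{M}\zeta$. Feeding this into the $\mathcal{C}_{\bar J}$ row produces $\tfrac{-\lambda}{\lambda+1}\,(\mathcal{M}\zeta)_{\bar J}=0$, while the $\bar J$-part of the first block row gives $\lambda\,\zeta_{\bar J}=(\mathcal{M}\zeta)_{\bar J}$. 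For $\lambda\ne 0$ these force $\zeta_{\bar J}=0$ and $(\mathcal{M}\zeta)_{\bar J}=0$, so (as $\nu$ and $\mu$ are then determined by $\zeta$) the null space embeds into $\theset{\zeta:\zeta_{\bar J}=0}$ and has dimension at most $\sum_{i\in J}(k_i-1)$. This is exactly the column surplus of the pencil over $\dim\Pmatrix{w\\v}$, so its rank is full and $\lambda$ is not a fixed mode.

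The case $\lambda=0$ is the crux, and it is where the isolated-NE hypothesis is indispensable. At $\lambda=0$ the identity $\lambda\,\zeta_{\bar J}=(\mathcal{M}\zeta)_{\bar J}$ degenerates and no longer pins down $\zeta_{\bar J}$, so the generic argument collapses; instead I would use that $\mathcal{M}$ is nonsingular. The null-space conditions reduce to $(\mathcal{M}\zeta)_{\bar J}=0$ with $\nu=\mathcal{M}\zeta$ and $\mu=(\mathcal{M}\zeta)_J$, and invertibility of $\mathcal{M}$ makes $\zeta\mapsto\mathcal{M}\zeta$ a bijection, so $\theset{\zeta:(\mathcal{M}\zeta)_{\bar J}=0}$ again has dimension $\sum_{i\in J}(k_i-1)$ and the rank is full. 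Were $\mathcal{M}$ singular, $\lambda=0$ would genuinely be a fixed mode, matching the equilibrium-subspace obstruction noted after (\ref{MGrad}) and confirming that isolation of the NE is precisely what is needed. Assembling the three cases shows $\mathcal{P}$ has no fixed mode in $\mathbf{Re}[\lambda]\ge 0$, and Wang--Davison then yields the desired uncoupled $K_i$.
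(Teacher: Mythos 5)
Your proposal is correct and follows essentially the same route as the paper: recast the existence of the uncoupled higher-order dynamics as decentralized stabilization of the plant $\mathcal{P}$ with the washout states built in, invoke the Wang--Davison fixed-mode rank criterion, and verify the rank condition for every partition of the channels, with the extreme partitions already handled by Proposition~\ref{prop:fullrank} and nonsingularity of $\mathcal{M}$ covering $\lambda=0$. The only (cosmetic) difference is that you bound the dimension of the right null space of the pencil directly, whereas the paper reaches the same rank count by block row operations leading to a triangular structure.
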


\noindent The proof of Theorem~\ref{thm:stabilize} relies on the conditions of Theorem~\ref{thm:davison} and is presented in Appendix~\ref{app:proof}.

Theorem~\ref{thm:stabilize} should be viewed as a statement regarding whether uncoupled learning in itself is a barrier to learning dynamics leading to NE. The theorem makes no claim that
the higher-order learning dynamics are interpretable (e.g., as in anticipatory learning). Nor does the theorem offer guidance on how agents may construct 
the matrices of higher-order learning that lead to convergence. In the next section, we will see that, while the structure is universal, any specific set of parameters is not universal in that one can construct a game for which they do not lead to NE. 

Despite the lack of universality, there is an inherent robustness that is a consequence of stability. The following follows from standard arguments on linear systems.

\begin{Proposition}\label{prop:near} Let the $(E_i,F_i,G_i,H_i)$ and $M_{ij}$, $i=1,...,n$ and $j=1,...,n$, be such that (\ref{eq:hogpCL}) is stable. Then there exists a  $\delta>0$ such that (\ref{eq:hogpCL}) is stable with the $M_{ij}$ replaced by any $\tilde{M}_{ij}$ as long as $\norm{\tilde{M}_{ij} - M_{ij}} < \delta$ for all $i=1,...,n$ and $j=1,...,n$.
\end{Proposition}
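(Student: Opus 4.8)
The plan is to exploit that the set of stability matrices is open and that the dynamics matrix in (\ref{eq:hogpCL}) depends continuously on the game matrices. Write $\mathcal{A}_{\mathrm{cl}}$ for the block matrix appearing in (\ref{eq:hogpCL}). Since $E$, $F$, $G$, $H$ are held fixed, $\mathcal{A}_{\mathrm{cl}}$ is a fixed affine function of $\mathcal{M}$; and by \eqref{MGrad}, $\mathcal{M}=\mathcal{N}\tr M\mathcal{N}$ is a linear function of the block matrix $M$ assembled from the $M_{ij}$, with the fixed matrix $\mathcal{N}$ built from the $N_i$. Hence $\mathcal{A}_{\mathrm{cl}}$ is a continuous (indeed affine) function of the tuple $(M_{ij})$, and it suffices to show that stability persists under a small perturbation of $\mathcal{A}_{\mathrm{cl}}$ in operator norm, then to pull that tolerance back to a bound on the $\norm{\tilde{M}_{ij}-M_{ij}}$.

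For the persistence of stability I would use a Lyapunov argument, which avoids any subtlety about the (generally non-smooth) dependence of individual eigenvalues on matrix entries. By hypothesis $\mathcal{A}_{\mathrm{cl}}$ is a stability matrix, so there is a symmetric $P\succ 0$ solving $\mathcal{A}_{\mathrm{cl}}\tr P+P\mathcal{A}_{\mathrm{cl}}=-I$. Fix this $P$. For any perturbed closed-loop matrix $\tilde{\mathcal{A}}_{\mathrm{cl}}$,
$$\tilde{\mathcal{A}}_{\mathrm{cl}}\tr P+P\tilde{\mathcal{A}}_{\mathrm{cl}}=-I+(\tilde{\mathcal{A}}_{\mathrm{cl}}-\mathcal{A}_{\mathrm{cl}})\tr P+P(\tilde{\mathcal{A}}_{\mathrm{cl}}-\mathcal{A}_{\mathrm{cl}}),$$
whose right-hand side stays negative definite as soon as $\norm{\tilde{\mathcal{A}}_{\mathrm{cl}}-\mathcal{A}_{\mathrm{cl}}}<1/(2\norm{P})$. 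The same $P$ then certifies that $\tilde{\mathcal{A}}_{\mathrm{cl}}$ is a stability matrix, so (\ref{eq:hogpCL}) with the perturbed matrices is stable. (Equivalently one could invoke continuity of the spectral abscissa: the characteristic polynomial coefficients are polynomial in the entries of $\mathcal{A}_{\mathrm{cl}}$, its roots depend continuously on those coefficients, and the maximal real part is therefore continuous, strictly negative at the nominal point, hence negative on a neighborhood.)

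It remains to convert the operator-norm tolerance on $\mathcal{A}_{\mathrm{cl}}$ into the desired $\delta$. Using $\norm{\mathcal{N}}=1$ (each $N_i$ is an isometry by \eqref{NDefinition}) gives $\norm{\tilde{\mathcal{M}}-\mathcal{M}}=\norm{\mathcal{N}\tr(\tilde{M}-M)\mathcal{N}}\le\norm{\tilde{M}-M}\le\sum_{i\neq j}\norm{\tilde{M}_{ij}-M_{ij}}$. Inspecting (\ref{eq:hogpCL}), $\mathcal{M}$ enters $\mathcal{A}_{\mathrm{cl}}$ only through the blocks $(I+H)\mathcal{M}$, $F\mathcal{M}$, and $\mathcal{M}$, so $\norm{\tilde{\mathcal{A}}_{\mathrm{cl}}-\mathcal{A}_{\mathrm{cl}}}\le c\,\norm{\tilde{\mathcal{M}}-\mathcal{M}}$ with $c=\norm{I+H}+\norm{F}+1$ depending only on the fixed $H$, $F$. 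Choosing $\delta$ so that $c\,n^{2}\delta<1/(2\norm{P})$ then guarantees that $\norm{\tilde{M}_{ij}-M_{ij}}<\delta$ for all $i,j$ forces $\norm{\tilde{\mathcal{A}}_{\mathrm{cl}}-\mathcal{A}_{\mathrm{cl}}}<1/(2\norm{P})$, completing the argument.

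I expect no genuine obstacle here: the entire content is that stability is an open property in matrix space and that $\mathcal{A}_{\mathrm{cl}}$ depends continuously on $(M_{ij})$. The only point deserving care is that individual eigenvalues need not vary smoothly where the spectrum is degenerate; the Lyapunov route sidesteps this entirely, which is why I would present that version rather than a direct eigenvalue-tracking argument.
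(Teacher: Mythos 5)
Your proof is correct and is precisely the ``standard argument on linear systems'' that the paper invokes without writing out: the paper offers no explicit proof of Proposition~\ref{prop:near}, appealing only to stability being an open property, and your Lyapunov-based certificate together with the affine dependence of the closed-loop matrix on the $M_{ij}$ is a complete and careful instantiation of exactly that reasoning. No gaps; the norm bookkeeping (isometry of $\mathcal{N}$, the constant $c=\norm{I+H}+\norm{F}+1$, and the $n^2$ block count) all checks out.
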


In words, this proposition guarantees that learning dynamics that lead to NE for a specific game continue to do so for nearby games.

\subsection{Stabilization through a Single Higher-Order Player}

The previous section's analysis allowed all players to utilize higher-order learning. In some cases, it may not be necessary that all players utilize higher-order learning. In this section, we present sufficient conditions under which a single player using higher-order gradient play with the remainder utilizing fixed order gradient play can still lead to NE.

 \begin{Assumption}\label{asm:full} \ \\[-\baselineskip]
 \begin{itemize}
 \item[A.] Let $(w,\lambda)$ be a left eigenvalue pair of $\mathcal{M}$, i.e.,
$$w\tr \mathcal{M} = \lambda w\tr,$$
with $\mathbf{Re}[\lambda] \geq 0$ and
$$w\tr = \Pmatrix{w_1\tr &  w_2\tr & ... & w_n\tr}$$
partitioned consistently with (\ref{MGrad}). 
Then $w_i\not= 0$ for all $i$.

\item[B.] Let $(v,\lambda)$ be a right eigenvalue pair of $\mathcal{M}$, i.e., 
$$\mathcal{M}v = \lambda v,$$
with $\mathbf{Re}[\lambda] \geq 0$ and
$$v = \Pmatrix{v_1\\ v_2\\ \vdots\\ v_n}$$
partitioned consistently with (\ref{MGrad}). Then $v_i\not= 0$ for all $i$.
\end{itemize}
 \end{Assumption}
 
Recall the definitions of $\mathcal{A}$, $\mathcal{B}_i$, and $\mathcal{C}_i$ from (\ref{eq:mathcalABC}).

\begin{Proposition}\label{prop:single} Let $\mathcal{M}$ be non-singular and satisfy Assumption~\ref{asm:full}. Then for any $i$, the pair $(\mathcal{A},\mathcal{B}_i)$ is stabilizable and
the pair $(\mathcal{A},\mathcal{C}_i)$ is detectable.
\end{Proposition}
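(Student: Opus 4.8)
The plan is to verify both properties through the Popov--Belevitch--Hautus (PBH) eigenvector tests, in the same spirit as the proof of Proposition~\ref{prop:fullrank}, but now carrying along the block partition so that Assumption~\ref{asm:full} can be invoked. Stabilizability of $(\mathcal{A},\mathcal{B}_i)$ is equivalent to $\Pmatrix{\lambda I - \mathcal{A} & \mathcal{B}_i}$ having full row rank for every $\lambda$ with $\mathbf{Re}[\lambda]\geq 0$, and detectability of $(\mathcal{A},\mathcal{C}_i)$ is equivalent to $\Pmatrix{\lambda I - \mathcal{A}\\ \mathcal{C}_i}$ having full column rank for every such $\lambda$. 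The only difference from Proposition~\ref{prop:fullrank} is that $\mathcal{B}_i$ and $\mathcal{C}_i$ touch only the $i$-th block, which is precisely what makes the stronger Assumption~\ref{asm:full} necessary.

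For stabilizability I would suppose a left null vector $\Pmatrix{p\tr & q\tr}$ of the PBH matrix and read off the three relations produced by its column blocks. The block carrying the $(\lambda+1)I$ term forces $(\lambda+1)q\tr = 0$; since $\mathbf{Re}[\lambda]\geq 0$ excludes $\lambda=-1$, this gives $q=0$. The remaining two relations then collapse to $p\tr\mathcal{M} = \lambda p\tr$ together with $p\tr\mathcal{E}_i = 0$, i.e. $p$ is a left eigenvector of $\mathcal{M}$ with $\mathbf{Re}[\lambda]\geq 0$ whose $i$-th block vanishes. Assumption~\ref{asm:full}.A asserts that every such eigenvector has all blocks nonzero, so $p=0$ and the null vector is trivial.

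For detectability I would dualize: suppose a right null vector $\Pmatrix{r\\ s}$ and extract the block equations. The first yields $\mathcal{M}r = \lambda r$, so $r$ is a right eigenvector of $\mathcal{M}$ (or zero); the second yields $(\lambda+1)s = \mathcal{M}r = \lambda r$; and the row associated with $\mathcal{C}_i$ yields $(\mathcal{M}r)_i = s_i$, i.e. $\lambda r_i = s_i$. Substituting this into the $i$-th block of the second equation collapses everything to $\lambda^2 r_i = 0$. Here the non-singularity of $\mathcal{M}$ is essential: if $r\neq 0$ then $\lambda\neq 0$, which forces $r_i = 0$ and contradicts Assumption~\ref{asm:full}.B. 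Hence $r=0$, whereupon $(\lambda+1)s=0$ with $\lambda\neq -1$ gives $s=0$.

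The computations are elementary block algebra, so I expect no serious analytic obstacle; the step requiring the most care is the bookkeeping of which block relation yields which constraint, so that the hypotheses are applied to the correct object --- Assumption~\ref{asm:full}.A (resp. B) on the vanishing $i$-th block of a left (resp. right) eigenvector, and the non-singularity of $\mathcal{M}$ to exclude $\lambda=0$ in the detectability branch. The detectability side is the more delicate of the two, since it is the interplay of $\lambda^2 r_i = 0$ with non-singularity that rules out the eigenvalue-at-origin case that the single-block $\mathcal{C}_i$ would otherwise leave open.
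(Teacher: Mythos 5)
Your proposal is correct and follows essentially the same route as the paper: a PBH rank test for each property, where the $(\lambda+1)I$ block kills the auxiliary component, leaving a left (resp.\ right) eigenvector of $\mathcal{M}$ whose $i$-th block must vanish, contradicting Assumption~\ref{asm:full}.A (resp.\ B, after using non-singularity to exclude $\lambda=0$). Your explicit handling of the $r=0$ case in the detectability branch is slightly more careful than the paper's, but the argument is the same.
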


\begin{proof}
For stabilizability, we need to examine the row rank of
$$\Pmatrix{\mathcal{A} - \lambda I&\mathcal{B}_i}$$
Suppose there is row rank deficiency, i.e., 
\begin{align*}
\Pmatrix{w\tr & z\tr} \Pmatrix{\lambda I - \mathcal{A}&\mathcal{B}_i }&= \Pmatrix{w\tr & z\tr} \Pmatrix{\lambda I - \mathcal{M}&0&\mathcal{E}_i\\ -\mathcal{M}&(\lambda+1) I&0}\\
&= \Pmatrix{0&0&0}
\end{align*}
for some non-zero $\Pmatrix{w&z}$. Then necessarily, $z = 0$ and $(w,\lambda)$ form a left eigenvector pair. By assumption, $w\tr \mathcal{E}_i \not= 0$, which is a contradiction.

Likewise, for detectability, we need to examine the column rank of 
$$\Pmatrix{\mathcal{C}_i\\ \mathcal{A} - \lambda I}.$$
Suppose there is column rank deficiency, i.e.,
$$\Pmatrix{\lambda I - \mathcal{M} &0\\-\mathcal{M} & (\lambda+1)I\\ \mathcal{M}_{i\bullet} & -\mathcal{E}_i\tr}\Pmatrix{v\\ z} = 0$$
for some non-zero $\Pmatrix{v\\ z}$.
Then 
$$\lambda v = \mathcal{M} v\And (\lambda + 1) z = \mathcal{M}v,$$
together imply that
$$z = \frac{\lambda}{\lambda + 1} v.$$
Furthermore,
$$z_i = \mathcal{M}_{i\bullet} v = \lambda v_i.$$
Putting these together, it must be that $\lambda v_i = 0$. By assumption, $\lambda\not=0$ (non-singularity of $\mathcal{M}$), and so $v_i\not=0$ (Assumption~\ref{asm:full}), resulting in a contradiction.
\end{proof}

As a consequence of Proposition~\ref{prop:single}, it is possible for a completely mixed-strategy NE to be stabilized where a single player utilizes higher-order gradient play with the remaining players utilizing fixed order gradient play.

\section{Non-Convergence to NE in Higher-Order Gradient Play}
 
In this section, we show that linear higher-order gradient play dynamics need not lead to NE. Given any such dynamics, we will construct an anti-coordination game with a unique mixed-strategy NE that is unstable under given dynamics. 

\subsection{The Jordan Anti-coordination Game}
The Jordan anti-coordination game was introduced in \cite{jordan1993three} and was later used in \cite{hart2003uncoupled} to prove that fixed-order uncoupled learning dynamics do not lead to NE. The game consists of three players with the following utility functions
\begin{align*}
u_1(x_1,x_2) &= x_1\tr M_{12} x_2\\
u_2(x_2,x_3) &= x_2\tr M_{23} x_3\\
u_3(x_3,x_1) &= x_3\tr M_{31} x_1,
\end{align*}
where
$$M_{12} =M_{23} = M_{31} = \Pmatrix{0&1\\1&0}.$$
The game has a unique mixed-strategy NE at
$$x_1^*= x_2^*= x_3^* = \Pmatrix{\frac{1}{2}\\[5pt] \frac{1}{2}}.$$
We will let $\Gamma(\mu)$ denote the Jordan anti-coordination game but with the utility function of player 1 modified to
$$u_1(x_1,x_2) = x_1\tr (\mu M_{12}) x_2,$$
where $\mu \in \mathbb{R}_+$.  Since scaling payoffs does not change the nature of the game, $\Gamma(\mu)$ has the same unique NE as $\Gamma(1)$. 

\subsection{Destabilization Using Rescaled Anti-Coordination}
 

When all three players use variants of linear higher-order gradient play in the Jordan anti-coordination game, we get the following dynamics  
\begin{align*}
\dot{x}_1 &= -x_1 + \Pi_\Delta\left[x_1 + \mu M_{12}x_2 + N\big(G_1 \xi_1 + h_1 (\mu N\tr  M_{12}x_2 - v_1)\big)\right]\\
\dot{\xi}_1 & = E_1 \xi_1 + F_1 (\mu N\tr  M_{12}x_2 - v_1)\\
\dot{v}_1 &= \mu N\tr  M_{12}x_2 - v_1,\\
\dot{x}_2 &= -x_2 + \Pi_\Delta\left[x_2 + M_{23}x_3 + N\big(G_2 \xi_2 + h_2 (N\tr M_{23}x_3 - v_2)\big)\right]\\
\dot{\xi}_2 & = E_2 \xi_2 + F_2 (N\tr M_{23}x_3 - v_2)\\
\dot{v}_2 &= N\tr M_{23}x_3 - v_2,\\
\dot{x}_3 &= -x_3 + \Pi_\Delta\left[x_3 + M_{31}x_1 + N\big(G_3\xi_3 + h_3 (N\tr M_{31}x_1 - v_3)\big)\right]\\
\dot{\xi}_3 & = E_3 \xi_3 + F_3 (N\tr  M_{31}x_1 - v_3)\\
\dot{v}_3 &= N\tr  M_{31}x_1 - v_3.
\end{align*}
To study the local behavior of the dynamics around the unique mixed-strategy NE, we define 
$$w_1(t)=N\tr(x_1(t)-x_1^*),\quad w_2(t)=N\tr(x_2(t)-x_2^*), \quad w_3(t)=N\tr(x_3(t)-x_3^*),$$
where 
$$N=\Pmatrix{\frac{1}{\sqrt{2}}\\[5pt] \frac{-1}{\sqrt{2}}}.$$

We can analyze the local stability of the mixed-strategy NE through the local collective dynamics

%
%
%
%
%
%
%

$$\Pmatrix{\dot{w}\\\dot{z}}= J^r \Pmatrix{w\\ z},$$
where 
$$J^r:=\Pmatrix{
0&-\mu(1+h_1)&0& G_1 &-h_1&0& 0&0&0\\

0&0&-(1+h_2)&
0&0&G_2&-h_2&0&0\\

-(1+h_3)&0&0&
0&0&0&0&G_3&-h_3\\

0&-\mu F_1&0& E_1 &-F_1&0& 0&0&0\\
0&-\mu &0& 0 &-1&0& 0&0&0\\

0&0&-F_2&
0&0&E_2&-F_2&0&0\\
0&0&-1&
0&0&0&-1&0&0\\
-F_3&0&0&
0&0&0&0&E_3&-F_3\\
-1&0&0&
0&0&0&0&0&-1
}.$$ 

In the above we have $w=(w_1,w_2,w_3)$, $z=(\xi_1,v_1,\xi_2,v_2,\xi_3,v_3)$, and we substituted $N\tr M_{12}N=N\tr M_{23}N=N\tr M_{31}N=-1$. Our main result suggests that even if the dynamics could stabilize the mixed-strategy NE of the nominal Jordan anti-coordination game, they would not be able to do it for some rescaled version of the Jordan anti-coordination game. 

\begin{Proposition}\label{prop:nonconvergance}
If linear higher-order gradient play dynamics are locally exponentially stable at the unique NE of $\Gamma(1)$, then there exists $\mu > 0$ such that the unique NE of   $\Gamma(\mu)$ is unstable under such dynamics.
\end{Proposition}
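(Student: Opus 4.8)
The plan is to collapse the eigenvalue problem for $J^r(\mu)$ into a single scalar root-locus condition in the gain $\mu$, and then to show that the relative degree of the associated loop transfer function forces eigenvalues into the open right-half plane as $\mu\to\infty$. First I would linearize each player's higher-order gradient play about the mixed NE and compute, in the Laplace domain, the transfer function $T_i(s)$ from the projected payoff deviation $N\tr p_i$ to the strategy deviation $w_i$. From $\dot v_i = N\tr p_i - v_i$ one reads off the washout-filtered signal $y_i = \tfrac{s}{s+1}(N\tr p_i)$, and passing this through the controller $K_i(s)=h_i+G_i(sI-E_i)^{-1}F_i$ gives
$$T_i(s) \;=\; \frac{1}{s} \;+\; \frac{K_i(s)}{s+1}.$$

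Next I would close the loop. Since the game couples the three players cyclically with $N\tr M_{12}N = N\tr M_{23}N = N\tr M_{31}N = -1$, and the scaling $\mu$ sits on player $1$'s edge only, the relations $w_1=-\mu T_1 w_2$, $w_2=-T_2 w_3$, $w_3=-T_3 w_1$ compose into $(1+\mu\,T_1T_2T_3)\,w_1=0$. Hence the spectrum of $J^r(\mu)$ is governed by
$$1 + \mu\, T_1(s)\,T_2(s)\,T_3(s) = 0,$$
i.e. by $d_1d_2d_3 + \mu\,n_1n_2n_3=0$ with $T_i=n_i/d_i$. A degree count ($\deg d_i = \ell_i+2$, so $\deg(d_1d_2d_3)=6+\sum_i\ell_i=\dim J^r$, while $\deg(n_1n_2n_3)=3+\sum_i\ell_i$) shows this polynomial has degree exactly $\dim J^r$ and no leading-term cancellation, so it is the characteristic polynomial up to sign and captures every eigenvalue.

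The crux is a relative-degree argument. Each $K_i$ is proper, so $T_i(s)\to 0$ like $(1+h_i)/s$; thus every $T_i$ is strictly proper and $L(s):=T_1T_2T_3$ has relative degree $r\ge 3$, with $L(s)\sim c/s^{r}$ for a nonzero real constant $c$. The unbounded branches of the root locus then satisfy $s^{r}\sim -\mu c$, so their limiting arguments are $r$ directions spaced $2\pi/r$ apart. Because $r\ge 3$, these directions cannot all lie in a closed half-plane: their minimal covering arc has width $2\pi(1-\tfrac1r)\ge \tfrac{4\pi}{3}>\pi$, so at least one limiting argument lies strictly inside $(-\tfrac\pi2,\tfrac\pi2)$. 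Along that branch $\mathbf{Re}[s]\to+\infty$, so for all sufficiently large $\mu$ the matrix $J^r(\mu)$ has an eigenvalue with positive real part and the NE of $\Gamma(\mu)$ is unstable. (This conclusion is in fact independent of stability at $\Gamma(1)$; the hypothesis merely frames the contrast, since if the dynamics already fail at $\mu=1$ the claim is immediate.)

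The step I expect to be the main obstacle is the exact factorization: rigorously confirming that the scalar loop equation coincides with $\det(sI-J^r(\mu))$ — that the cascade realizations are minimal enough that no eigenvalue disappears through pole–zero cancellation — and verifying $c\neq 0$, including the degenerate case $1+h_i=0$, where the relative degree only increases (keeping $r\ge 3$) and the asymptote argument still applies. Everything downstream is a routine application of the root-locus asymptote formula.
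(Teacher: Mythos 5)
Your proof is correct and is essentially the paper's first proof in transfer-function clothing: the paper writes $J^r = A - \mu BC$ and obtains relative degree $\ge 3$ of the scalar loop gain from the Markov-parameter conditions $CB = CAB = 0$ (using exponential stability at $\mu=1$ only to ensure the gain is not identically zero, which is exactly your $c\neq 0$ step), where you obtain it from the strict properness of each of the three per-player transfer functions $T_i$; both arguments then invoke the same root-locus asymptote rule. The pole--zero cancellation issue you flag as the main obstacle is in fact moot for the conclusion, since the identity $\det\left(sI - J^r(\mu)\right) = \det\left(sI-A\right)\left(1+\mu L(s)\right)$ guarantees that the unbounded right-half-plane branches of the root locus are genuine eigenvalues of $J^r(\mu)$ for all large $\mu$ regardless of minimality of the realization.
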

The following two subsections will provide two proofs for Proposition~\ref{prop:nonconvergance}. In both proofs, we use root-locus arguments. 
\subsubsection{A proof using sufficiently large $\mu$}
We will exploit the structure of $J^r$ to study the behavior of its eigenvalues as $\mu$ varies. Consider first the following lemma. 
\begin{Lemma}\label{lemma:5.1}
Let $A \in \mathbb{R}^{n \times n}$, $B \in \mathbb{R}^{n \times 1} \text { and } C \in \mathbb{R}^{1 \times n}$. If $CB =CAB = 0$, and $CA^mB\not=0$ for some $m \ge 2$. Then for sufficiently large $\mu > 0$, $A-\mu BC$ is not a stability matrix.
\end{Lemma}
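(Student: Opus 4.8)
The plan is to recognize $A - \mu BC$ as a rank-one feedback perturbation of $A$ and to track its spectrum via the scalar root locus of the transfer function $G(s) = C(sI-A)^{-1}B$. First I would apply the matrix determinant lemma to write the characteristic polynomial as
$$\det(sI - A + \mu BC) = \det(sI-A)\bigl(1 + \mu\, C(sI-A)^{-1}B\bigr) = D(s) + \mu N(s),$$
where $D(s) = \det(sI-A)$ is monic of degree $n$ and $N(s) = C\,\mathrm{adj}(sI-A)\,B$, so that the eigenvalues of $A-\mu BC$ are exactly the roots of $1 + \mu G(s)=0$ with $G = N/D$.

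Next I would read off the \emph{relative degree} from the Markov parameters: expanding $G(s) = \sum_{k\ge 0} (CA^kB)\,s^{-(k+1)}$ about $s=\infty$, the hypotheses $CB = CAB = 0$ together with $CA^m B \neq 0$ for some $m\ge 2$ force the lowest surviving term to be $c\,s^{-r}$ with $c = CA^{r-1}B \neq 0$ and relative degree $r \ge 3$. Consequently $\deg N = n - r$ with leading coefficient $c$, so the characteristic polynomial $D(s)+\mu N(s)$ has degree $n$ for every $\mu$ while its lower-order part is the piece scaled by $\mu$.

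Then comes the root-locus asymptotics, which I expect to be the main obstacle to make fully rigorous. Substituting $s = \mu^{1/r}\zeta$ and factoring out the dominant power of $\mu$, the characteristic equation reduces, away from $\zeta=0$, to $\zeta^r + c + O(\mu^{-1/r}) = 0$. A Rouch\'e argument on small disks around each root of $\zeta^r + c = 0$ then shows that $r$ of the closed-loop eigenvalues diverge like $\mu^{1/r}\zeta_k$, where the $\zeta_k$ are the $r$-th roots of $-c$ and hence are equally spaced in argument with angular gap $2\pi/r$; the remaining $n-r$ eigenvalues stay bounded, converging to the finite zeros of $G$. Thus any instability for large $\mu$ must come from the diverging branch.

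Finally I would close with an elementary geometric observation. Since the $r$ points $\zeta_k$ are equally spaced on a circle, the smallest closed arc containing all of them has length $2\pi - 2\pi/r$; for $r \ge 3$ this exceeds $\pi$, so they cannot all lie inside the closed left-half-plane arc $[\pi/2,\,3\pi/2]$. Hence at least one $\zeta_k$ has strictly positive real part, and for $\mu$ sufficiently large the corresponding eigenvalue, lying near $\mu^{1/r}\zeta_k$, has real part on the order of $\mu^{1/r}\,\mathrm{Re}(\zeta_k) \to +\infty$. Therefore $A-\mu BC$ has an eigenvalue in the open right half-plane and is not a stability matrix.
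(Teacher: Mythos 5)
Your proposal is correct and follows essentially the same route as the paper: the rank-one determinant identity reduces the spectrum of $A-\mu BC$ to the root locus of $D(s)+\mu N(s)$, and the hypotheses $CB=CAB=0$, $CA^mB\neq 0$ force relative degree $r\ge 3$, so the $r$ asymptotic branches (equally spaced at angular gap $2\pi/r$) cannot all avoid the open right half-plane. The only difference is that the paper invokes the root-locus asymptote rule by citation to Krall, whereas you prove it directly via the rescaling $s=\mu^{1/r}\zeta$, Rouch\'e's theorem, and the arc-length pigeonhole --- a self-contained and entirely adequate substitute.
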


\begin{proof}
Define
$$H(s) = C(sI-A)^{-1}B.$$
Since $H(s)$ is a rational function, we can write it as
$$H(s) = \frac{p(s)}{q(s)},$$
for polynomials $p$ and $q$ that have no common roots. The assumption that $CA^mB \not= 0$ for some $m$ assures that $H(s)$ is not identically equal to zero. 

Suppose that for some $\mu$ and $s'$ that is not an eigenvalue of $A$,
$$q(s') + \mu p(s') = 0.$$
Then $s'$ is an eigenvalue of $A - \mu BC$, since
\begin{align*}
\Det{s' I - (A - \mu BC)} &= \Det{s' I - A}\Det{I + \mu(s'I - A)^{-1}BC}\\
&= \Det{s'I - A}(1 + \mu C(s'I - A)^{-1}B)\\
&=\Det{s'I-A}(q(s') + \mu p(s')) \frac{1}{q(s')}.
\end{align*}
Note that the roots of $q(s)$ are a subset of the roots of $\Det{sI - A}$.
For sufficiently large $\magn{s}$, we can rewrite $H(s)$ as
\begin{align*}
H(s) &= \frac{1}{s} C(I - \frac{1}{s}A)^{-1} B\\
&=\frac{1}{s}C \Big( \sum_{k = 0}^{\infty} \frac{1}{s^k} A^k \Big) B.
\end{align*}
By assumption, $CB = 0$ and $CAB = 0$, which implies that the first two terms of the series equal zero. Accordingly,
$$\lim\sup_{\magn{s}\rightarrow \infty} \magn{s}^3 {\magn{H(s)}} < \infty.$$
The main implication here is that the degree of $q(s)$ is at least 3 more than the degree of $p(s)$. Root-locus arguments in \cite{krall1961extension} and \cite{krall1970rootlocus} (asymptote rule) imply that
$$q(s) + \mu p(s)$$ 
has roots with positive real parts for large $\mu$.
\end{proof}

 The matrix $J^r$ can be written in the form $A - \mu BC$ with

%
%
%
%
%
%
%
%
%
%

\begin{subequations}
\begin{gather}\label{Jrshape}
\small
A = 
\Pmatrix{
0&0&0& G_1 &-h_1&0& 0&0&0\\
-(1+h_3)&0&0&
0&0&G_3&-h_3&0&0\\
0&-(1+h_2)&0&
0&0&0&0&G_2&-h_2\\
0&0&0& E_1 &-F_1&0& 0&0&0\\
0&0 &0& 0 &-1&0& 0&0&0\\
-F_3&0&0&
0&0&E_3&-F_3&0&0\\
-1&0&0&
0&0&0&-1&0&0\\
0&-F_2&0&
0&0&0&0&E_2&-F_2\\
0&-1&0&
0&0&0&0&0&-1
}
\quad B = \Pmatrix{ h_1+1 \\ 0 \\ 0 \\  F_1\\ 1 \\ 0\\0\\0\\0}\\
C = \Pmatrix{0&0&1&0&0&0&0&0&0},
\end{gather}
\end{subequations}
where we reordered the variables according to 
$(w_1, w_3, w_2, \xi_1,v_1,\xi_3,v_3,\xi_2,v_2)$ for convenience.
Following the same arguments in proving Lemma~\ref{lemma:5.1}, if $CA^mB= 0$ for all $m$, then eigenvalues of $A-BC$ are the same as the eigenvalues of $A$. 
Writing $A$ in block matrix form yields
$$A = \Pmatrix{A_{11}&A_{12}\\
A_{21}&A_{22}},$$
where $A_{11}$ is $3\times 3$. Notice that $A_{11}$ is strictly lower triangular, and $A_{21}$ is strictly block lower triangular. Now examine
$$\Det{sI - A} = \Det{sI - A_{11}}\Det{(sI - A_{22}) - A_{21}(sI - A_{11})^{-1}A_{12}}.$$
One can show that $A_{21}(sI - A_{11})^{-1}A_{12}$ is strictly block lower triangular. Therefore
$$\Det{sI - A} = \Det{sI - A_{11}}\Det{sI - A_{22}}.$$
Thus, $A$ has eigenvalues at 0 with multiplicity 3 or more because of $A_{11}$. By exponential stability, there exists $m\geq2$ such that $CA^mB\neq 0$. Since $CB=0$, and $CAB=0$, we can now apply Lemma~\ref{lemma:5.1} to show that $\exists$ $\mu$ such that $A-\mu BC$ is not a stability matrix.

\subsubsection{A proof using sufficiently small $\mu$}
The scaling $\mu$ was large in the previous proof, and the analysis was asymptotic. Non-convergence over bounded games, e.g., 
$$ \lVert M_{ij} \rVert_\alpha < 1 \quad \forall i,j$$
is considered next.

\begin{Lemma}\label{lemma:5.2}
Let $A \in \mathbb{R}^{n \times n}$, $B \in \mathbb{R}^{n \times 1}$$ \text { and } C \in \mathbb{R}^{1 \times n}$. Assume that $A$ has eigenvalues at $0$ with multiplicity 3 or more. Then for sufficiently small $\mu > 0$, $A-\mu BC$ is not a stability matrix.
\end{Lemma}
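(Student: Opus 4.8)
The plan is to run a root-locus analysis dual to that of Lemma~\ref{lemma:5.1}: instead of the large-gain asymptotes, I track the eigenvalues of $A - \mu BC$ that emanate from the origin as the scalar gain $\mu$ is turned on from $0$. As in the proof of Lemma~\ref{lemma:5.1}, write $q(s) = \Det{sI - A}$ and let $p(s)$ be the polynomial determined by $p(s) = q(s)\,C(sI-A)^{-1}B$. The matrix-determinant identity used there gives the characteristic polynomial of the perturbed matrix as
$$\Det{sI - (A-\mu BC)} = q(s) + \mu p(s).$$
By hypothesis $0$ is an eigenvalue of $A$ of algebraic multiplicity $m \ge 3$, so I would factor $q(s) = s^m r(s)$ with $r(0) \ne 0$.

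First I would dispose of the degenerate case $p(0) = 0$ (which includes $p \equiv 0$): here $q(0) + \mu p(0) = 0$ for every $\mu$, so $0$ remains an eigenvalue of $A - \mu BC$ and the matrix is never a stability matrix. The substantive case is $p(0) \ne 0$, which I would handle by the rescaling $s = \mu^{1/m}\sigma$. Substituting and dividing by $\mu$ turns $q + \mu p$ into
$$\sigma^m\, r\!\big(\mu^{1/m}\sigma\big) + p\!\big(\mu^{1/m}\sigma\big),$$
whose coefficients converge, as $\mu \to 0^+$, to those of the limiting polynomial $r(0)\,\sigma^m + p(0)$. The $m$ roots of this limit solve $\sigma^m = -p(0)/r(0)$ and are therefore equally spaced on a circle with angular gap $2\pi/m$. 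Because $m \ge 3$, this gap is at most $2\pi/3 < \pi$, so the $m$ roots cannot all lie in the closed left half-plane (an arc of length $\pi$); at least one root $\sigma^\star$ satisfies $\mathbf{Re}[\sigma^\star] > 0$.

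To finish, I would invoke continuity of roots: applying Rouch\'e's theorem on a small disc about $\sigma^\star$ shows that for all sufficiently small $\mu > 0$ the rescaled polynomial has a root $\sigma(\mu) \to \sigma^\star$, whence $\mathbf{Re}[\sigma(\mu)] > 0$; undoing the scaling, $s(\mu) = \mu^{1/m}\sigma(\mu)$ is an eigenvalue of $A - \mu BC$ with $\mathbf{Re}[s(\mu)] = \mu^{1/m}\,\mathbf{Re}[\sigma(\mu)] > 0$, so the matrix is not a stability matrix. I expect the main obstacle to be making this continuity step airtight: for $\mu > 0$ the rescaled polynomial has degree strictly larger than $m$ (higher powers of $\sigma$ appear once $\mu^{1/m}\sigma$ is expanded in $r$ and $p$), so the surplus roots escape to infinity as $\mu \to 0$ and only the $m$ bounded branches converge to the roots of $r(0)\sigma^m + p(0)$; restricting Rouch\'e to a fixed disc, on which $r(0)\ne 0$ keeps these branches bounded, is what isolates them. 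The conceptual crux is the half-plane counting argument, and it is precisely here that the hypothesis $m \ge 3$ is needed: for multiplicity $m = 2$ the two emanating branches are antipodal and could both land on the imaginary axis, so a double zero eigenvalue may be stabilized while a triple one cannot.
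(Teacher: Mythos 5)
Your proof is correct and follows the same root-locus route as the paper: both track the $m \ge 3$ eigenvalue branches of $A - \mu BC$ that depart from the origin as the gain $\mu$ is turned on, and both dispose first of the degenerate cancellation case in which a zero eigenvalue persists for all $\mu$ (your condition $p(0) = 0$ is the unreduced counterpart of the paper's ``$q(s)$ does not have at least 3 roots at zero''). The only difference is that where the paper simply cites the classical angle-of-departure rule, you prove the needed instance of it from scratch --- via the rescaling $s = \mu^{1/m}\sigma$, the observation that the $m$-th roots of $-p(0)/r(0)$ are spaced by $2\pi/m < \pi$ and hence cannot all avoid the open right half-plane, and a Rouch\'e argument on a fixed disc to pull the strictly-unstable root back to $q + \mu p$ --- which is a complete and correct substitute, and your closing remark on why multiplicity $2$ would not suffice is exactly the right diagnosis of where the hypothesis $m \ge 3$ enters.
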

\begin{proof}
    As in the proof of Lemma~\ref{lemma:5.1}, we have
$$\Det{s I - (A - \mu BC)} = \Det{sI-A}(q(s) + \mu p(s)) \frac{1}{q(s)}.$$
Recall that the roots of $q(s)$ are a subset of the roots of $\Det{sI - A}$. If $q(s)$ does not have at least 3 roots at zero, then $A-\mu BC$ is not a stability matrix. Otherwise, root-locus arguments in \cite{krall1961extension}, and \cite{krall1970rootlocus} (angle of departure rule)  imply there exist roots of $q(s) + \mu p(s)$ with positive real parts for small $\mu$.
\end{proof} 

Using the structure of $J^r$ in \eqref{Jrshape} and the fact that $A$ has eigenvalues at 0 with multiplicity 3 or more, one can directly use Lemma~\ref{lemma:5.2} to show the existence of a sufficiently small offending $\mu$.
 
\subsection{Discussion}\label{sec:muDiscussion}

In both proofs, we did not construct a single game for which all higher-order gradient play  dynamics do not lead to NE. Instead, for any such dynamics,  we show a ``challenger'' $\Gamma(\mu)$ is destabilizing. It is also worth mentioning that the choice of the Jordan anti-coordination game is not limiting. The implications of the root-locus arguments will hold given any game with the appropriate structure and number of players.

The results might be puzzling because, for all $\mu>0$, all games $\Gamma(\mu)$ are strategically equivalent. Convergence guarantees for learning dynamics are usually established amongst classes of games. Thus, it is generally expected that dynamics will behave similarly for all games in a particular class. In this case, we design linear learning dynamics that are affected by simple rescaling of the payoff matrices.

\section{Strong Stabilization of Mixed-Strategy NE}
Results from Section~\ref{sec:stabilization} imply that the mixed-strategy NE in a two-player $2\times2$ (identical-interest) coordination game can be stabilized. Here, we argue why dynamics that stabilize this mixed-strategy equilibrium are not reasonable. Specifically, we show that such dynamics \textit{must be} inherently unstable as an open system, i.e., as dynamics that respond to an exogenous payoff stream, and this instability is problematic with respect to such payoffs. 

First, we inspect which type of mixed-strategy NE requires unstable learning dynamics for stabilization. For this purpose, consider the system in \eqref{eq:decentralized} for $n=k_1=k_2=2$:
\begin{equation}\label{twoplayers}
 \mathcal{A} = \Pmatrix{0&m_{12} &0&0\\ m_{21}&0&0&0\\ 0&m_{12}&-1&0\\ m_{21}&0&0&-1}\quad \mathcal{B} = \Pmatrix{1&0\\0&1\\0&0\\0&0}
\end{equation}
$$\mathcal{C} = \Pmatrix{0&m_{12}&-1&0\\ m_{21}&0&0&-1}.$$
Around an isolated mixed-strategy NE, the matrix $\mathcal{A}$ above should be non-singular. Accordingly, it must be that $m_{12}\neq0$ and $m_{21}\neq0$. The ability to stabilize a system via another stable system is referred to as \emph{strong stabilization} (see Appendix~\ref{app:stabilization}).
The next proposition gives a sufficient condition under which an isolated mixed-strategy NE is not strongly stabilizable.

\begin{Proposition}\label{prop:strong}
 If $m_{12}m_{21} > 0$, then \eqref{twoplayers} is not strongly stabilizable.
\end{Proposition}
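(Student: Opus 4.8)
The plan is to reduce strong stabilizability of \eqref{twoplayers} to a pole/blocking-zero count and then invoke the parity interlacing property (PIP), the classical test for strong stabilizability underlying Appendix~\ref{app:stabilization}: a plant is strongly stabilizable if and only if it has an even number of poles (with multiplicity) in the open right-half plane strictly between every pair of real blocking zeros lying in the extended closed right-half plane, where $s=+\infty$ is counted as a blocking zero exactly when the plant is strictly proper.

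First I would compute the plant transfer matrix $\mathcal{P}(s)=\mathcal{C}(sI-\mathcal{A})^{-1}\mathcal{B}$. Since $sI-\mathcal{A}$ is block lower triangular with diagonal blocks $\Pmatrix{s & -m_{12}\\ -m_{21}& s}$ and $(s+1)I$, and since $\mathcal{B}=\Pmatrix{I\\0}$ while $\mathcal{C}=\Pmatrix{C_1 & -I}$ with $C_1=\Pmatrix{0 & m_{12}\\ m_{21}&0}$, the off-diagonal inverse block together with the $-I$ term in $\mathcal{C}$ combine to a scalar multiple of $C_1$, so the product collapses to
$$\mathcal{P}(s)=\frac{s}{(s+1)(s^2-m_{12}m_{21})}\Pmatrix{m_{12}m_{21} & m_{12}s\\ m_{21}s & m_{12}m_{21}},\qquad \det\mathcal{P}(s)=\frac{-m_{12}m_{21}\,s^2}{(s+1)^2(s^2-m_{12}m_{21})}.$$
Writing $\rho=\sqrt{m_{12}m_{21}}>0$ (real by hypothesis), the poles sit at $s=-1,\,-\rho,\,\rho$.

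Next I would assemble the data the PIP needs. The only right-half-plane pole is $s=\rho$, and I would confirm it is a genuine pole of $\mathcal{P}$ rather than a pole-zero cancellation, either by observing that $(s-\rho)$ survives in the denominator of $\det\mathcal{P}(s)$, or more structurally by invoking that $(\mathcal{A},\mathcal{B})$ is stabilizable and $(\mathcal{A},\mathcal{C})$ is detectable (Proposition~\ref{prop:fullrank}), so the unstable eigenvalue $\rho$ of $\mathcal{A}$ is controllable and observable and hence appears as a transfer-function pole. For the blocking zeros I would verify directly that every entry of $\mathcal{P}$ vanishes at $s=0$, so $\mathcal{P}(0)=0$, and that $\mathcal{P}$ is strictly proper, so $\mathcal{P}(\infty)=0$; these are the only real nonnegative blocking zeros. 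Applying the PIP, the real blocking zeros in the extended closed right-half plane are $0$ and $+\infty$, and the open interval $(0,\infty)$ between them contains exactly one right-half-plane pole, namely $\rho$. One is odd, so the PIP fails and \eqref{twoplayers} is not strongly stabilizable.

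The step demanding the most care, and the main obstacle, is the multivariable bookkeeping around poles and zeros: ensuring $\rho$ is a true pole and not a cancellation (hence the controllability/observability appeal), and, crucially, remembering to count $s=\infty$ as a blocking zero because $\mathcal{P}$ is strictly proper. Omitting the zero at infinity would leave the lone unstable pole $\rho$ paired only with the single finite blocking zero $0$ and no second endpoint, defeating the parity count; it is precisely the pair $\{0,\infty\}$ straddling the single odd unstable pole that forces non-strong-stabilizability. The remaining verifications are routine linear algebra.
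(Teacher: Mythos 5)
Your proposal is correct and follows essentially the same route as the paper: compute $\mathcal{C}(sI-\mathcal{A})^{-1}\mathcal{B}=\frac{s}{s+1}\mathcal{M}(sI-\mathcal{M})^{-1}$, identify the blocking zeros at $s=0$ and $s=\infty$, note that the single unstable pole $\sqrt{m_{12}m_{21}}$ lies between them, and conclude from the parity interlacing principle of \cite{youla1974singleloop}. Your extra care in ruling out pole--zero cancellation via Proposition~\ref{prop:fullrank} is a detail the paper leaves implicit, but it does not change the argument.
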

\begin{proof} Reference \cite{youla1974singleloop} presents a necessary and sufficient condition for strong stabilizability. First, we compute 
\begin{align*}
T(s) &= \mathcal{C}(sI-\mathcal{A})^{-1}\mathcal{B}\\
&= \frac{s}{s+1} \mathcal{M}(sI - \mathcal{M})^{-1},
\end{align*}
where we used
$$(sI - \mathcal{A})^{-1} = \Pmatrix{ (sI-\mathcal{M})^{-1}&0\\ \frac{1}{s+1} \mathcal{M}(sI - \mathcal{M})^{-1}& \frac{1}{s+1} I}.$$
There is blocking zero (i.e., $T(s) = 0$) at $s = 0$ and $\magn{s} \rightarrow \infty$. 
According to \cite{youla1974singleloop}, a necessary and sufficient condition for strong stabilizability is that there should be an \textit{even} number of eigenvalues in between such pairs of real zeros. This property is known as the ``parity interlacing principle''.
The eigenvalues of $\mathcal{A}$ are
$$-1, -1, \pm \sqrt{m_{12}m_{21}},$$
and so there is a single (and hence, an odd number) real eigenvalue in between two
real zeros of $T(s)$.  Accordingly, \eqref{twoplayers} is not strongly stabilizable.
\end{proof}

The nature of the game can be inferred from the scalars $m_{12}$ and $m_{21}$. For example in zero-sum games we have $M_{12} = -M_{21}\tr$, which gives
$$m_{12} = N\tr M_{12} N = N\tr M_{12}\tr N = -N\tr M_{21} N = -m_{21}.$$
In coordination games, we have $M_{12} = M_{21}$, which gives
$$m_{12} = N\tr M_{12}N = N\tr M_{21} N = m_{21}.$$
Therefore, the mixed-strategy NE in a coordination game is not strongly stabilizable. 

Now let us examine the implications of inherently unstable learning dynamics. A reasonable expectation of learning dynamics is that in the case of a constant payoff vector, i.e., $p_i(t) \equiv p^*$, then we expect
\[ \lim_{t\rightarrow \infty} x_{i}(t) = \beta(p^*) ,\]
where $\beta(p^*) $ is a best response, i.e.,
\[\beta(p^*)=\argmax_{x_i\in\Delta(k_i)} x_i\tr p^*. \]
For any higher-order gradient play dynamics, if $E_i$ is a stability matrix, then whenever $p_i(t)\equiv p^*$ for some constant vector $p^*$, one can show $\xi_i(t)\rightarrow0$, which implies that  $x_i(t)$ is generated by standard gradient play dynamics in the limit. However, if $p_i(t)\equiv p^*$ and the dynamics are inherently unstable, the term $N_iG_i \xi_i(t)$ need not vanish. Indeed, one can construct $p^*$ such that $x_i(t)$ does not converge to the best response of $p^*$ (see the example in Section~\ref{sec:coordinationexm}). The inability of learning dynamics to converge to the best response of a constant payoff  vector does not reflect ``natural'' behavior.

\section{Simulations and Examples}

\subsection{Jordan Anti-Coordination Game: Stabilization Through a Single Player}

We will now attempt to stabilize the Jordan anti-coordination game's mixed-strategy NE, allowing only one player to use higher-order learning while others continue to use standard gradient play. To do that, we must check Assumption~\ref{asm:full} for the Jordan anti-coordination game. Finding the right/left eigenvalues and eigenvectors of the matrix 
$$\mathcal{M}=\Pmatrix{0 &-1 &0\\0& 0 &-1\\-1& 0 &0},$$
we see that Assumption~\ref{asm:full} is satisfied. Therefore, we let $\xi_1\in\mathbb{R}$ and choose $H_1=\gamma\lambda$, $G_1=-\gamma\lambda$, $F_1=\lambda$ and $E_1=-\lambda$, where $\lambda=50$ and $\gamma=5$. Such dynamics resemble anticipatory gradient play but on the filtered low-dimensional payoff. Figure~\ref{fig:singleJordan} illustrates convergence to NE using these parameters in the Jordan anti-coordination game.
\begin{figure}
    \centering
    \includegraphics[width=.5\textwidth]{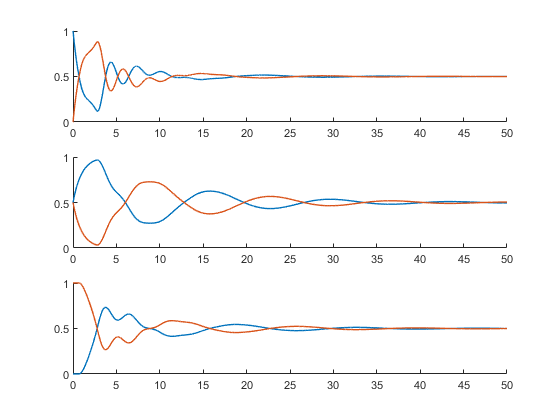}
    \caption{Player strategies in Jordan anti-coordination game: Single-player stabilization.}
    \label{fig:singleJordan}
\end{figure}

\subsection{Robust Stabilization of the Jordan Anti-Coordination Game}
Let us now demonstrate the robustness of the learning dynamics designed in the previous example. According to Proposition~\ref{prop:near}, such dynamics should stabilize nearby games. Consider for example the following perturbed game
\begin{align*}
    M_{12}= \Pmatrix{0&1\\1&0}+\tilde{M}_1 , \quad  M_{23}= \Pmatrix{0&1\\1&0}+\tilde{M}_2, \quad  
    M_{31}= \Pmatrix{0&1\\1&0}+\tilde{M}_3.
\end{align*} The entries of each $\tilde{M}_i$ are sampled from a zero mean normal distribution,  and we only consider small standard deviations to avoid nearby games with pure NE. Figure~\ref{fig:randomJordan} shows how the dynamics from Section 7.1 stabilize the unique mixed-strategy NE of two perturbed Jordan anti-coordination games with standard deviations of 0.3 and 0.5.
\begin{figure}
     \centering
     \begin{subfigure}[b]{0.4\textwidth}
         \centering
         \includegraphics[width=\textwidth]{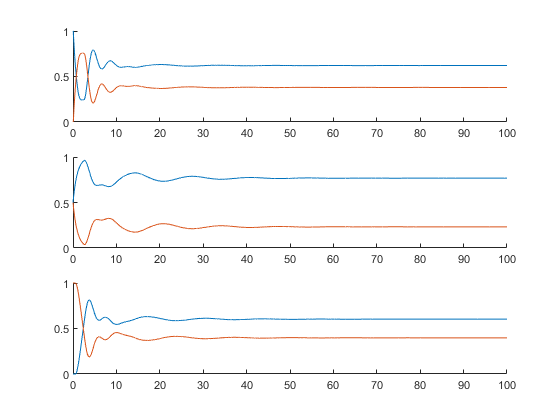}
         \caption{Standard deviation 0.3.}
         \label{Sd 0.3.}
     \end{subfigure}
     \hfil
     \begin{subfigure}[b]{0.4\textwidth}
         \centering
         \includegraphics[width=\textwidth]{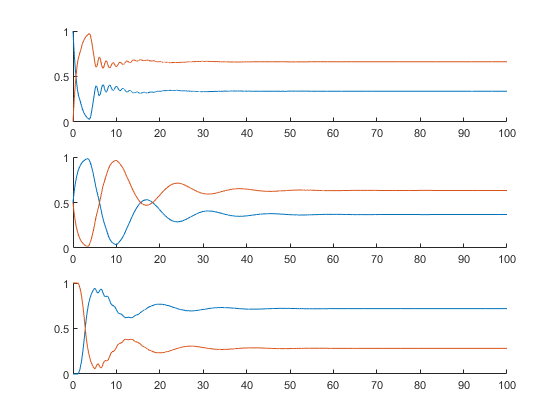}
         \caption{Standard deviation 0.5.}
         \label{Sd 0.3.}
     \end{subfigure}
    \caption{Player strategies in randomly perturbed Jordan anti-coordination game}
        \label{fig:randomJordan}
\end{figure}

Now consider perturbations of the form 
\begin{align*}
    M_{12}= \Pmatrix{\delta_1&1\\1&\delta_1},
     \quad M_{23}= \Pmatrix{\delta_2&1\\1&\delta_2}, \quad 
    M_{31}= \Pmatrix{\delta_3&1\\1&\delta_3},
\end{align*}
where $0<\delta_i<1$. For small values of $\delta_i$, the dynamics from the previous example converge to the unique mixed-strategy equilibrium at $$x_1^*= x_2^*= x_3^* = \Pmatrix{\frac{1}{2}\\[5pt] \frac{1}{2}}.$$ 
 Convergence for values $\delta_1=0.3877$, $\delta_2=0.1446$, and $\delta_3=0.1352$ is illustrated in Figure~\ref{fig:Smalldelta}. In Figure~\ref{fig:largerdelta}, we show how the dynamics no more stabilize the unique mixed-strategy NE for larger perturbations of values $\delta_1=0.8831$, $\delta_2=0.4259$ and $\delta_3=0.7546 $.

\begin{figure}
     \centering
     \begin{subfigure}[b]{0.4\textwidth}
         \centering
         \includegraphics[width=\textwidth]{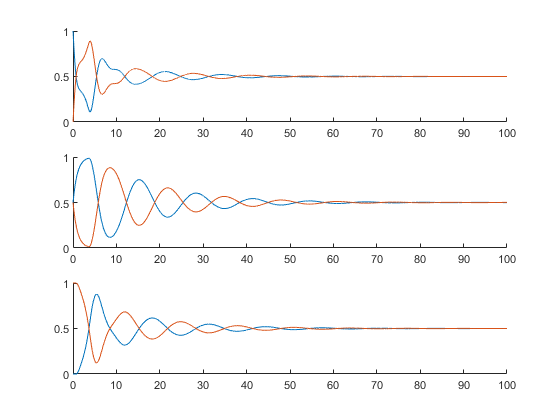}
         \caption{Small $\delta_i$'s.}
         \label{fig:Smalldelta}
     \end{subfigure}
     \hfil
     \begin{subfigure}[b]{0.4\textwidth}
         \centering
         \includegraphics[width=\textwidth]{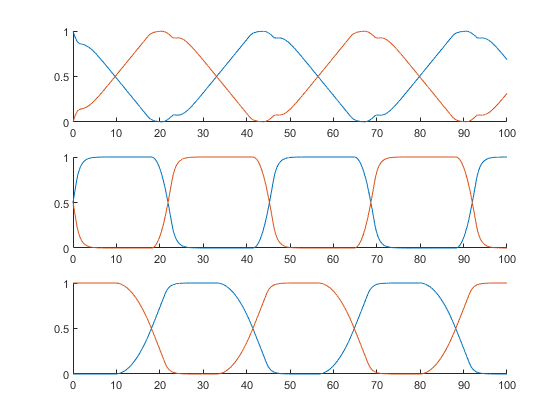}
         \caption{Larger $\delta_i$'s.}
         \label{fig:largerdelta}
     \end{subfigure}
    \caption{Player strategies in diagonally perturbed Jordan anti-coordination game}
        \label{fig:PerturbedJordan}
\end{figure}
\subsection{Rescaled Jordan Anti-Coordination Game}
We now present an example of how different scalings of a payoff matrix affect the stability of  linear higher-order gradient dynamics in the Jordan anti-coordination game. We will let all three players use higher-order gradient play and study the dynamics for different $\Gamma(\mu)$ games. We let $\xi_i\in\mathbb{R}$ for all $i$ and choose the following parameters for all players: $h_i=\gamma\lambda$, $G_i=-\gamma_2\lambda$, $F_i=\lambda$ and $E_i=-\lambda$. Let $\gamma_1=1$, $\gamma_2=.8$, and $\lambda=5$. When $\mu=1$, the matrix $J^r$ is a stability matrix, and we get convergence to NE.  

\begin{figure}
    \centering
    \includegraphics[width=.5\textwidth]{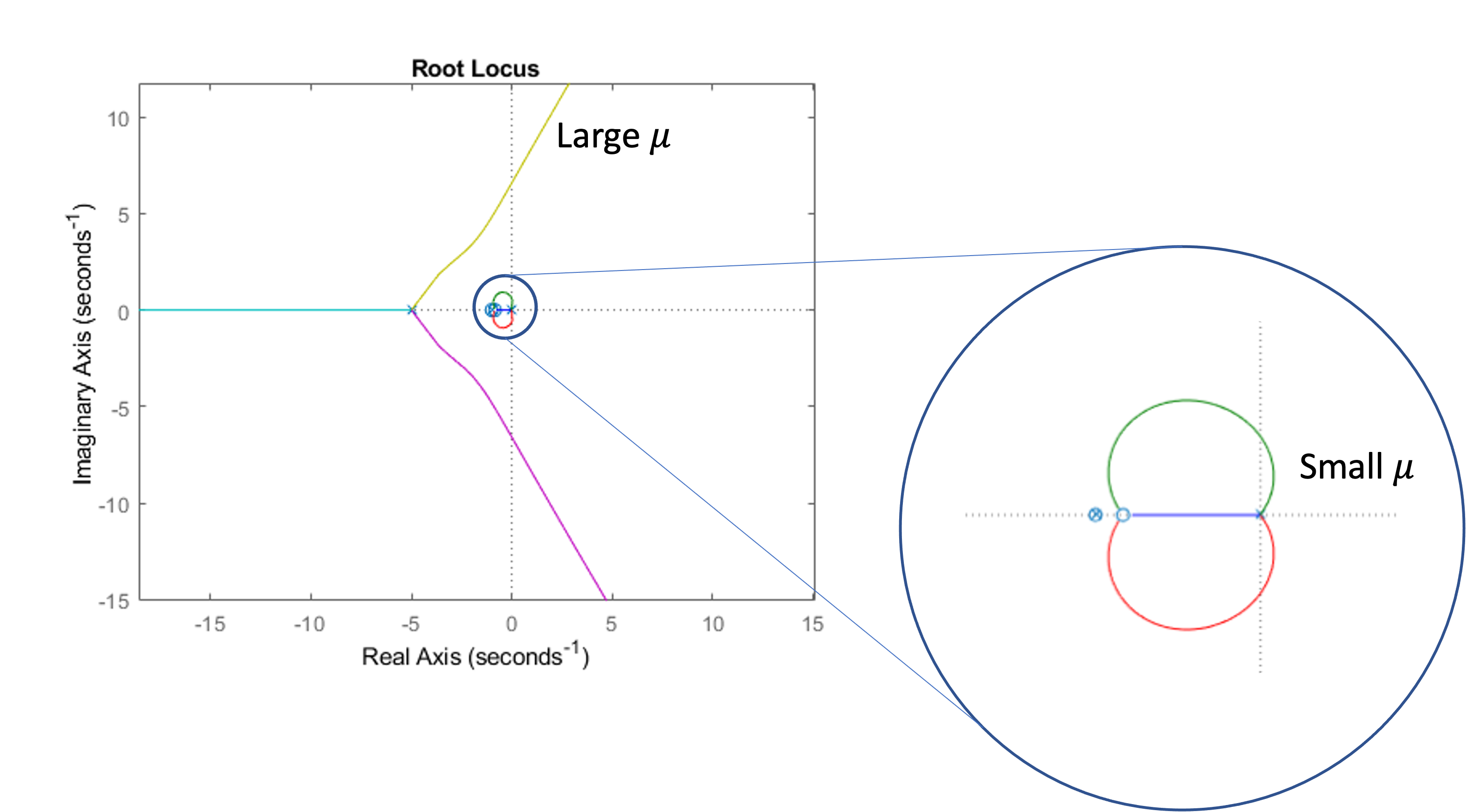}
    \caption{Root-locus plot of eigenvalue locations for $\mu\in (0,\infty)$.}
    \label{fig:rootlocusCombo}
\end{figure}

The root-locus plot in Figure~\ref{fig:rootlocusCombo} shows the eigenvalues of $J^r$ as $\mu$ varies from $0$ to $\infty$. As seen in the plot, there are eigenvalues with 
positive real parts for large $\mu$ (approximately $\mu > 3$). Figure~\ref{fig:largemu} illustrates instability for $\mu=5$. The root-locus plot in Figure~\ref{fig:rootlocusCombo} also shows that the eigenvalues of $J^r$ that start at the origin first drift into the right-half-plane for small $\mu > 0$ before returning to the left-half-plane, approximately at $\mu > 0.113$, as $\mu$ increases. Figure~\ref{fig:smallmu} illustrates instability for $\mu=0.1$.

\begin{figure}
     \centering
     \begin{subfigure}[b]{0.4\textwidth}
         \centering
    \includegraphics[width=\textwidth]{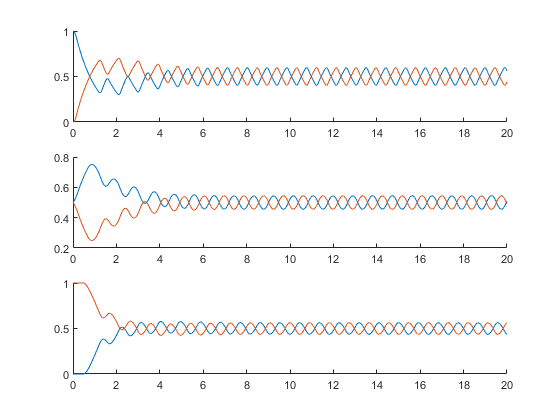}
    \caption{Player strategies in rescaled Jordan anti-coordination game: $\mu=5$.}
    \label{fig:largemu}
     \end{subfigure}
     \hfil
     \begin{subfigure}[b]{0.4\textwidth}
         \centering
    \includegraphics[width=\textwidth]{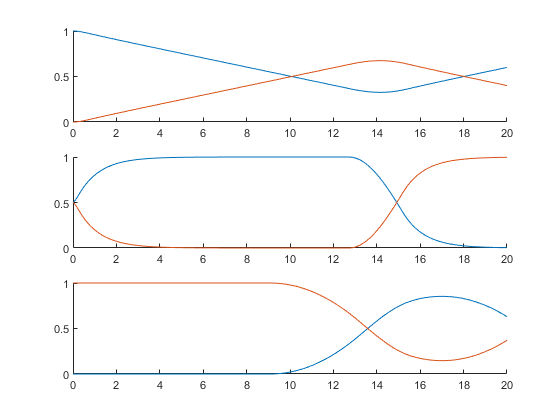}
    \caption{Player strategies in rescaled Jordan anti-coordination game: $\mu=0.1$.}
    \label{fig:smallmu}
     \end{subfigure}
    \caption{Rescaled Jordan anti-coordination game}
        \label{fig:RescaledJordan}
\end{figure}

\subsection{Stabilization of Mixed-Strategy NE in Coordination Games}
\label{sec:coordinationexm}
Next, we illustrate the implications of the inherent instability of the learning dynamics through the simple (identical interest) coordination game:
\begin{align*}
u_1(x_1,x_2) &= x_1\tr \Pmatrix{1&0\\0& 1} x_2\\
u_2(x_2,x_1) &= x_2\tr \Pmatrix{1 &0\\0& 1} x_1.
\end{align*}

Figure~\ref{fig:gpCoordination} illustrates the vector-field associated with fixed-order gradient play. There are 2 pure strategy NE that are stable, and one completely mixed-strategy NE that is unstable. 

\begin{figure}
\centering
\includegraphics[width=0.5\textwidth]{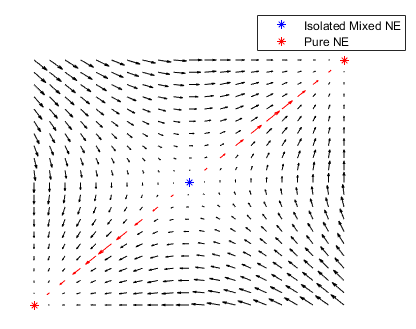}
\caption{Vector-field of fixed order gradient play.}\label{fig:gpCoordination}
\end{figure}

According to Proposition~\ref{prop:strong}, to stabilize the mixed NE at $(1/2,1/2)$, at least one player must use inherently unstable learning dynamics. We let $\xi_i\in\mathbb{R}$ for both players, and we consider the following set of parameters for higher-order gradient play: $E_1=\lambda$, $F_1=-2\lambda$, $G_1=\gamma\lambda$, $H_1=-\gamma\lambda$, $E_2=-\lambda_2$, $F_2=\lambda_2$, $G_2=-\gamma_2\lambda_2$, and $H_2=\gamma_2\lambda_2$. The numerical values are $\lambda=0.5$, $\gamma=20$, $\lambda_2=50$, and $\gamma_2=1$. Figure~\ref{fig:coordination} illustrates convergence to the mixed-strategy NE of this coordination game.

Obviously, the dynamics of $\xi_1$ are inherently unstable. Suppose we break the feedback loop and use $p^*=\Pmatrix{0\\1}$ as the input to player 1's dynamics. The response of player 1 to such input is illustrated in Figure~\ref{fig:inherent}. We see that $G_1>0$ is a scalar, and so $\xi_1$ grows without bound. The strategy $x_1$, which is projected to the simplex, converges to $\Pmatrix{1\\0}$, which is not a best response to the input payoff vector $\Pmatrix{0\\ 1}$.

\begin{figure}
     \centering
     \begin{subfigure}[t]{0.4\textwidth}
         \centering
    \includegraphics[width=\textwidth]{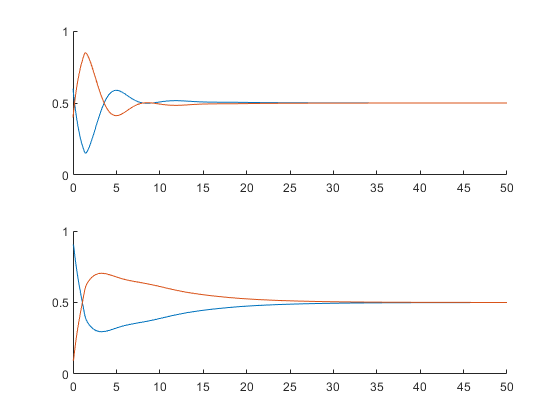}
    \caption{Coordination game: Stabilization of the mixed-strategy NE.}
    \label{fig:coordination}
     \end{subfigure}
     \hfil
     \begin{subfigure}[t]{0.4\textwidth}
         \centering
    \includegraphics[width=\textwidth]{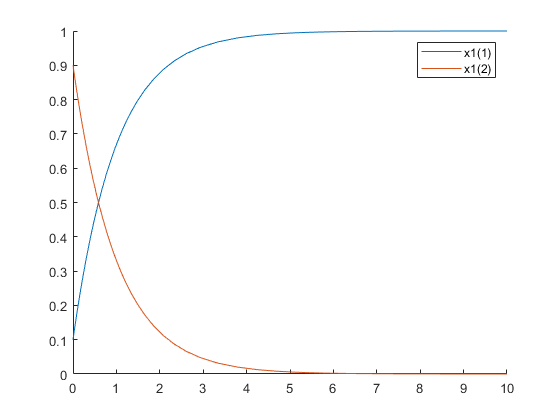}
    \caption{Inherently unstable higher-order dynamics do not converge to best response of constant payoff $p^*=\Pmatrix{0\\1}$.}
    \label{fig:inherent}
     \end{subfigure}
    \caption{Stabilizing the mixed-strategy equilibrium of a coordination game and its consequences.}
\end{figure}

\section{Concluding Remarks}

To recap, we studied the role of higher-order gradient play with linear higher-order dynamics. We showed that for any game with an isolated completely mixed-strategy NE, there exist higher-order gradient play dynamics that lead to that NE, both for the original game and for nearby games. On the other hand, we showed that for any higher-order gradient play dynamics, the dynamics do not lead to NE for a suitably rescaled anti-coordination game. We also provided an argument against dynamics that lead to 
the mixed-strategy NE in a coordination game, showing they are not reasonable. Since coordination games have pure equilibria, an interesting question remains whether one can construct such an example (where strong stabilization is impossible) with a game that has a unique mixed-strategy NE and no pure equilibria.

Regarding the higher-order gradient play dynamics that lead to NE, the interpretation of the results herein should not be that these dynamics are either a descriptive model of learning or a prescriptive recommendation for computation. Rather, the results are a contribution towards delineating what is possible or impossible in multi-agent learning, and in that sense, they may be seen as complement to the contributions in \cite{hart2003uncoupled}. Namely, dynamics being uncoupled is not a barrier to converging to mixed-strategy NE when allowing higher-order learning. 

More generally, the present results open new questions related to the discussion in \cite{hart2013simple} on what constitutes ``natural'' learning dynamics. In the case of anticipatory higher-order learning, there is clear interpretation of the effect of higher-order terms. It is unclear how to interpret general higher-order dynamics. Furthermore, the results herein regarding inherent instability of higher-order dynamics that converge to the mixed-strategy NE of a coordination game suggests that higher-order learning can be ``unnatural''. Possible restrictions on dynamics, in addition to being uncoupled, could include having no asymptotic regret; maintaining qualitative behavior in the face of strategically equivalent games (cf., Section~\ref{sec:muDiscussion}); or having an interpretable relationship between payoff streams and strategic evolutions such as ``passivity'', which generalizes and extends the notion of contractive games to contractive learning dynamics (e.g., \cite{fox2013population, arcak2021dissipativity, pavel2022dissipativity}). 

\appendix

\section{Background on Linear Systems}

Here, we review some standard background material for linear dynamics systems. There are several references (e.g., \cite{hespanha18linear,rugh1996linear}) with more detailed exposition.

\subsection{Notation}

The partitioned matrix
$$\ssSys{A}{B}{C}{D}$$
represents a general linear dynamical system
\begin{subequations}\label{eq:ABCD}
\begin{align}
\dot{x} &= Ax + Bu,\quad x(0) = x_o\\ 
y &= Cx + Du
\end{align}
\end{subequations}
whose solution is
$$y(t) = Ce^{At}x_o + \int_0^t  Ce^{A(t-\tau)}B u(\tau)\thinspace d\tau + Du(t).$$
The variables $x$, $u$, and $y$ here are used temporarily as generic placeholders (and will
play a different role in the ensuing discussion). The notation $P\sim \ssSys{A}{B}{C}{D}$ assigns the label $P$ to the dynamics (\ref{eq:ABCD}).

\subsection{Washout filters} \label{app:washout}

In the special case of
$$P\sim\ssSys{-I}{I}{-I}{I}$$
it is straightforward to show that if
$$\lim_{t\rightarrow \infty} u(t)= u^*,$$
then
$$\lim_{t\rightarrow \infty} y(t) = 0.$$
Linear systems with this property are known as ``washout filters''. For an extended discussion, see \cite{hassouneh2004washout}.


\subsection{Stability and Stabilization} 
\label{app:stabilization}
A matrix, $M$, is a \textit{stability matrix} if,  for every eigenvalue, $\lambda$, of $M$, $\mathbf{Re}[\lambda] < 0$.

The linear system, $P\sim \ssSys{A}{B}{C}{D}$ is \textit{stable} if $A$ is a stability matrix.

The linear system $K\sim \ssSys{E}{F}{G}{H}$ \textit{stabilizes} $P$ if the combined linear dynamics
\begin{align*}
\dot{x} &= Ax + Bu\\
\dot{\xi} &= E\xi + Fy\\
u &= G\xi + Hy\\
y &= Cx
\end{align*}
or
$$\Pmatrix{\dot{x}\\ \dot{\xi}} = \Pmatrix{A+BHC&BG\\FC&E}\Pmatrix{x\\ \xi}$$
are stable.

The following conditions are necessary and sufficient for the existence of such a $K$.
For all complex $\lambda$ with $\mathbf{Re}[\lambda]\ge 0$:
\begin{itemize}
\item The pair $(A,B)$ is \textit{stabilizable}: $\Pmatrix{A - \lambda I&B} = n$ has full row rank.
\item The pair $(A,C)$ is \textit{detectible}:   $\Pmatrix{C\\ A - \lambda I } = n$ has full column rank.
\end{itemize}

The linear system $K$ \textit{strongly stabilizes} $P$ if (i) $K$ stabilizes $P$ and (ii) $E$ is a stability matrix. Necessary and sufficient condition for the existence of a strongly stabilizing $K$ are presented in \cite{youla1974singleloop}. 

\subsection{Decentralized Stabilization} 

Let $P\sim\ssSys {A}{B}{C}{0}$, with $A$ having dimensions $n\times n$, have the structure
$$B = \Pmatrix{B_1&...&B_k} \And C = \Pmatrix{C_1\\ \vdots \\ C_k}$$
for some integer $k$. Suppose there exist linear systems
$$K_1 \sim \ssSys{E_1}{F_1}{G_1}{H_1}, ..., K_k \sim \ssSys{E_k}{F_k}{G_k}{H_k}$$
such that the combined linear dynamics
\begin{align*}
\dot{x} &= Ax + \Pmatrix{B_1&...&B_k} \Pmatrix{u_1\\ \vdots\\ u_k}\\
\dot{\xi}_1 &= E_1 \xi_1 + F_1 y_1\\
&\quad\vdots\\
\dot{\xi}_k &= E_k \xi_k + F_k y_k\\
u_1 &= G_1 \xi_1 + H_1 y_1\\
&\quad\vdots\\
u_k &= G_k \xi_k + H_k y_k\\
y_1 &= C_1 x\\
&\quad\vdots\\
y_k &= C_k x
\end{align*}
are stable. Then the $(K_1,...,K_k)$ achieve \textit{decentralized stabilization} of $P$. The previous conditions of $(A,B)$ 
stabilizable and $(A,C)$ detectable are necessary, but not sufficient, conditions for decentralized stabilization.

The following theorem from \cite{davison1990decentralized} provides necessary and sufficient conditions for decentralized stabilization. 
First, for any partition $Q\cup R = \theset{1,2,...,k}$
define $B\vert^Q$ as the matrix formed by extracting the block columns of $B = \Pmatrix{B_1&...&B_k}$ with indices in $Q$, i.e.,
$$B\vert^Q = \Pmatrix{B_{q_1}&...&B_{q_\magn{Q}}}$$
with $\theset{q_1,...,q_\magn{Q}} = Q$. Likewise, define $C\vert_R$ as the matrix 
formed from the block rows of $C = \Pmatrix{C_1\\ \vdots\\ C_k}$, i.e.,
$$C\vert_R = \Pmatrix{C_{r_1}\\ \vdots\\ C_{r_\magn{R}}}$$
with $\theset{r_1,...,r_\magn{R}} = R$.

\begin{Theorem}[\cite{davison1990decentralized}, Theorem 3]\label{thm:davison} 
There exist $(K_1,...,K_k)$ that achieve decentralized
stabilization of $P$ if and only if
$$\mathbf{rank}\Pmatrix{A - \lambda I&B\vert^Q\\ C\vert_R&0} = n$$
for all complex $\lambda$ with $\mathbf{Re}[\lambda]\ge 0$ and all partitions, $Q\cup R = \theset{1,2,...,k}$.
\end{Theorem}

The above rank condition must hold for \textit{all} partitions $Q\cup R = \theset{1,...,k}$. If $R=\emptyset$, one recovers the rank condition for (centralized) stabilizability. Likewise, 
$Q=\emptyset$ results in the rank condition for detectability.

\section{Proof of Theorem~\ref{thm:stabilize}}\label{app:proof}

We will examine the conditions of Theorem~\ref{thm:davison} on the system (\ref{eq:decentralized})--(\ref{eq:mathcalABC}). We need to inspect the rank of
$$\Pmatrix{\mathcal{A} - \lambda I& \mathcal{B}\vert^Q\\ \mathcal{C}\vert_R & 0}$$
for all partitions $Q\cup R = \theset{1,2,...,n}$.
Note that the partitions of either $Q=\emptyset$ or $R=\emptyset$ are already covered by Proposition~\ref{prop:fullrank}.

First, note that
$$\Pmatrix{\mathcal{A} - \lambda I& \mathcal{B}\vert^Q\\ \mathcal{C}\vert_R & 0} = 
\Pmatrix{\mathcal{M}-\lambda I&0&\Pmatrix{\mathcal{E}_{q_1} & ... & \mathcal{E}_{q_\magn{Q}}}\\
\mathcal{M}&-(\lambda+1)I&0\\
\mathcal{M}\vert_R&-I\vert_R&0}.$$
Let $\mathcal{M}$ be an $\ell\times \ell$ matrix. Then
$$\ell = \sum_{i=1}^n  (k_i - 1).$$
We need the rank of the above matrix to be $2\ell$ for all $\lambda$ with $\mathbf{Re}[\lambda] \ge 0$. Because of the presence of $\mathcal{A} - \lambda I$, loss of rank below $2\ell$ is only possible at eigenvalues of $\mathcal{A}$. Since we are only concerned with $\mathbf{Re}[\lambda] \ge 0$, we focus on eigenvalues of $\mathcal{M}$ (which excludes $\lambda = 0$ by hypothesis, since $\mathcal{M}$ is non-singular). 

Without affecting the rank, we can multiply the bottom block row by $-(\lambda+1)$ and add the middle block rows corresponding $R$ to the rescaled bottom block row to get
$$\mathbf{rank}\Pmatrix{\mathcal{A} - \lambda I& \mathcal{B}\vert^Q\\ \mathcal{C}\vert_R & 0} = 
\mathbf{rank} \Pmatrix{\mathcal{M}-\lambda I&0&\Pmatrix{\mathcal{E}_{q_1} & ... & \mathcal{E}_{q_\magn{Q}}}\\
\mathcal{M}&-(\lambda+1)I&0\\
-\lambda \mathcal{M}\vert_R&0&0}.$$
Switching the top and bottom block rows results in
$$\Pmatrix{-\lambda \mathcal{M}\vert_R&0&0\\
\mathcal{M}&-(\lambda+1)I&0\\
\mathcal{M}-\lambda I&0&\Pmatrix{\mathcal{E}_{q_1} & ... & \mathcal{E}_{q_\magn{Q}}}}.$$

We can now exploit the block triangular structure. The bottom block row provides a row rank of
$$\sum_{q\in Q} (k_q - 1),$$
The middle block row provides a row rank of $\ell$. Finally, the top block row provides a row rank of 
$$\sum_{r\in R} (k_r - 1).$$
The last assertion is because $\mathcal{M}$ is non-singular, by hypothesis, and therefore it has linearly independent rows.  Since $Q\cup R = \theset{1,...n}$, we have the desired row rank of $2\ell$.

\bibliographystyle{ieeetr}
\bibliography{Allcitations}

\begin{thebibliography}{10}

\bibitem{fudenberg1998theory}
D.~Fudenberg and D.~K. Levine, {\em The Theory of Learning in Games}.
\newblock Economic Learning and Social Evolution Series, MIT Press, 1998.

\bibitem{fudenberg2009learning}
D.~Fudenberg and D.~K. Levine, ``Learning and equilibrium,'' {\em Annual Review
  of Economics}, vol.~1, no.~1, pp.~385--420, 2009.

\bibitem{hart2005adaptive}
S.~Hart, ``Adaptive heuristics,'' {\em Econometrica}, vol.~73, no.~5,
  pp.~1401--1430, 2005.

\bibitem{young2004strategic}
H.~Young, {\em Strategic Learning and its Limits}.
\newblock Arne Ryde memorial lectures, Oxford University Press, 2004.

\bibitem{hart2000simple}
S.~Hart and A.~Mas-Colell, ``A simple adaptive procedure leading to correlated
  equilibrium,'' {\em Econometrica}, vol.~68, no.~5, pp.~1127--1150, 2000.

\bibitem{berger2005fictitious}
U.~Berger, ``Fictitious play in 2$\times$n games,'' {\em Journal of Economic
  Theory}, vol.~120, no.~2, pp.~139--154, 2005.

\bibitem{monderer1996fictitiousplay}
D.~Monderer and L.~S. Shapley, ``Fictitious play property for games with
  identical interests,'' {\em Journal of Economic Theory}, vol.~68, no.~1,
  pp.~258--265, 1996.

\bibitem{shamma2004unified}
J.~S. Shamma and G.~Arslan, ``Unified convergence proofs of continuous-time
  fictitious play,'' {\em IEEE Transactions on Automatic Control}, vol.~49,
  no.~7, pp.~1137--1141, 2004.

\bibitem{shapley1964sometopics}
L.~S. Shapley, ``Some topics in two-person games,'' in {\em Advances in Game
  Theory} (L.~Shapley, M.~Dresher, and A.~Tucker, eds.), pp.~1--29, Princeton,
  NJ: Princeton University Press, 1964.

\bibitem{foster1998nonconvergence}
D.~P. Foster and H.~Young, ``On the nonconvergence of fictitious play in
  coordination games,'' {\em Games and Economic Behavior}, vol.~25, no.~1,
  pp.~79--96, 1998.

\bibitem{foster1997calibrated}
D.~P. Foster and R.~V. Vohra, ``Calibrated learning and correlated
  equilibrium,'' {\em Games and Economic Behavior}, vol.~21, no.~1, pp.~40--55,
  1997.

\bibitem{piliouras2014optimization}
G.~Piliouras and J.~S. Shamma, ``Optimization despite chaos: {C}onvex
  relaxations to complex limit sets via {P}oincar\'e recurrence,'' in {\em
  Proceedings of the 2014 Annual ACM-SIAM Symposium on Discrete Algorithms},
  pp.~861--873, 2014.

\bibitem{Young1998Book}
H.~P. Young, {\em Individual Strategy and Social Structure: An Evolutionary
  Theory of Institutions}.
\newblock Princeton University Press, 1998.

\bibitem{daskalakis2009complexity}
C.~Daskalakis, P.~W. Goldberg, and C.~H. Papadimitriou, ``The complexity of
  computing a {N}ash equilibrium,'' {\em SIAM Journal on Computing}, vol.~39,
  no.~1, pp.~195--259, 2009.

\bibitem{babichenko2022communication}
Y.~Babichenko and A.~Rubinstein, ``Communication complexity of approximate
  {N}ash equilibria,'' {\em Games and Economic Behavior}, vol.~134,
  pp.~376--398, 2022.

\bibitem{hart2013simple}
S.~Hart and A.~Mas-Colell, {\em Simple Adaptive Strategies}.
\newblock WORLD SCIENTIFIC, 2013.

\bibitem{hart2003uncoupled}
S.~Hart and A.~Mas-Colell, ``Uncoupled dynamics do not lead to {N}ash
  equilibrium,'' {\em American Economic Review}, vol.~93, pp.~1830--1836,
  December 2003.

\bibitem{shamma2005dynamic}
J.~S. Shamma and G.~Arslan, ``Dynamic fictitious play, dynamic gradient play,
  and distributed convergence to {N}ash equilibria,'' {\em IEEE Transactions on
  Automatic Control}, vol.~50, pp.~312--327, March 2005.

\bibitem{muehlebach2021optimization}
M.~Muehlebach and M.~I. Jordan, ``Optimization with momentum: Dynamical,
  control-theoretic, and symplectic perspectives,'' {\em J. Mach. Learn. Res.},
  vol.~22, no.~1, 2021.

\bibitem{daskalakis2018limit}
C.~Daskalakis and I.~Panageas, ``The limit points of (optimistic) gradient
  descent in min-max optimization,'' in {\em Advances in Neural Information
  Processing Systems (NeurIPS)}, 2018.

\bibitem{basar1987relaxation}
T.~Ba{\c s}ar, ``Relaxation techniques and asynchronous algorithms for on-line
  computation of non-cooperative equilibria,'' {\em Journal of Economic
  Dynamics and Control}, vol.~11, no.~4, pp.~531--549, 1987.

\bibitem{conlisk1993adaptation}
J.~Conlisk, ``Adaptation in games: Two solutions to the {C}rawford puzzle,''
  {\em Journal of Economic Behavior \& Organization}, vol.~22, no.~1,
  pp.~25--50, 1993.

\bibitem{flam2003newtonian}
S.~Flam and J.~Morgan, ``Newtonian mechanics and {N}ash play,'' {\em
  International Game Theory Review}, vol.~06, 07 2003.

\bibitem{laraki2012higher}
R.~Laraki and P.~Mertikopoulos, ``Higher order game dynamics,'' {\em Journal of
  Economic Theory}, vol.~148, pp.~2666--2695, 06 2013.

\bibitem{gao2021passivity}
B.~Gao and L.~Pavel, ``On passivity, reinforcement learning and higher order
  learning in multiagent finite games,'' {\em IEEE Transactions on Automatic
  Control}, vol.~66, no.~1, pp.~121--136, 2021.

\bibitem{foster2006regret}
D.~P. Foster and H.~Young, ``Regret testing: {L}earning to play {N}ash
  equilibrium without knowing you have an opponent,'' {\em Theoretical
  Economics}, vol.~1, no.~3, pp.~341--367, 2006.

\bibitem{wang1973stabilization}
S.-H. Wang and E.~Davison, ``On the stabilization of decentralized control
  systems,'' {\em IEEE Transactions on Automatic Control}, vol.~18, no.~5,
  pp.~473--478, 1973.

\bibitem{davison1990decentralized}
E.~Davison and T.~Chang, ``Decentralized stabilization and pole assignment for
  general proper systems,'' {\em IEEE Transactions on Automatic Control},
  vol.~35, no.~6, pp.~652--664, 1990.

\bibitem{krall1961extension}
A.~M. Krall, ``An extension and proof of the root-locus method,'' {\em Journal
  of the Society for Industrial and Applied Mathematics}, vol.~9, no.~4,
  pp.~644--653, 1961.

\bibitem{krall1970rootlocus}
A.~M. Krall, ``The root locus method: A survey,'' {\em SIAM Review}, vol.~12,
  no.~1, pp.~64--72, 1970.

\bibitem{fox2013population}
M.~J. Fox and J.~S. Shamma, ``Population games, stable games, and passivity,''
  {\em Games}, vol.~4, no.~4, pp.~561--583, 2013.

\bibitem{arcak2021dissipativity}
M.~Arcak and N.~C. Martins, ``Dissipativity tools for convergence to {N}ash
  equilibria in population games,'' {\em IEEE Transactions on Control of
  Network Systems}, vol.~8, no.~1, pp.~39--50, 2021.

\bibitem{arslan2006anticipatory}
G.~Arslan and J.~S. Shamma, ``Anticipatory learning in general evolutionary
  games,'' in {\em Proceedings of the 45th IEEE Conference on Decision and
  Control}, pp.~6289--6294, 2006.

\bibitem{jordan1993three}
J.~Jordan, ``Three problems in learning mixed-strategy {N}ash equilibria,''
  {\em Games and Economic Behavior}, vol.~5, no.~3, pp.~368--386, 1993.

\bibitem{youla1974singleloop}
D.~Youla, J.~Bongiorno, and C.~Lu, ``Single-loop feedback-stabilization of
  linear multivariable dynamical plants,'' {\em Automatica}, vol.~10, no.~2,
  pp.~159--173, 1974.

\bibitem{pavel2022dissipativity}
L.~Pavel, ``Dissipativity theory in game theory: On the role of dissipativity
  and passivity in nash equilibrium seeking,'' {\em IEEE Control Systems
  Magazine}, vol.~42, no.~3, pp.~150--164, 2022.

\bibitem{hespanha18linear}
J.~P. Hespanha, {\em Linear Systems Theory: Second Edition}.
\newblock Princeton University Press, 2018.

\bibitem{rugh1996linear}
W.~J. Rugh, {\em Linear System Theory}.
\newblock Prentice Hall, 1996.

\bibitem{hassouneh2004washout}
M.~Hassouneh, H.-C. Lee, and E.~Abed, ``Washout filters in feedback control:
  benefits, limitations and extensions,'' in {\em Proceedings of the 2004
  American Control Conference}, pp.~3950--3955, 2004.

\end{thebibliography}
\end{document}